\documentclass[runningheads]{llncs}

\usepackage[utf8]{inputenc}
\usepackage[T1]{fontenc}

\usepackage{graphicx}
\usepackage[svgnames]{xcolor}
\usepackage{fca}
\usepackage{mathtools}
\usepackage{marvosym}
\usepackage{cite}
\usepackage[hidelinks]{hyperref}

\usepackage{enumitem}
\usepackage{booktabs}
\usepackage{tikz}
\usepackage{tikzscale}
\usepackage[outline]{contour}   
\contourlength{2.5pt}  
\dimendef\prevdepth=0    
\usepackage{svg}
\usepackage[ruled,vlined]{algorithm2e}
\usepackage{cleveref}
%\usepackage{breqn}

%%%%%%%%%%%%%%%%%%%%%%  layout   %%%%%%%%%%%%%%%%%%%%%%
% max. Anzahl an Gleitobjekten auf einer Seite (oben, unten, gesamt)
\setcounter{topnumber}{5}
\setcounter{bottomnumber}{5}
\setcounter{totalnumber}{20}
\setcounter{dbltopnumber}{5}

\newcommand{\forlongversion}[1]{}

\newcommand{\todo}[1]{{\bf\color{red} Todo: #1}}

\newcommand{\old}[1]{{\color{teal} #1}}

\renewcommand{\old}[1]{ }
\newcommand{\commentout}[1]{ }

\newcommand{\K}{\mathbb{K}}

\newcommand{\uB}{\underline{\mathfrak{B}}}
\newcommand{\BB}{\mathfrak{B}}

\newcommand{\BC}{BC}

\DeclareMathOperator{\Int}{Int}
\DeclareMathOperator{\Th}{Th}

\newcommand{\abs}[1]{{\mid}#1{\mid}}

\newcommand{\UBB}{\underline{\BB}}

\let\subset\subseteq
\let\cref\Cref

\begin{document}
\title{The Birkhoff completion of finite lattices}
%
%\titlerunning{Abbreviated paper title}
% If the paper title is too long for the running head, you can set
% an abbreviated paper title here
%
\author{Mohammad Abdulla\inst{1,2}, Johannes Hirth\inst{1,2}, Gerd Stumme\inst{1,2}} 
%
%\authorrunning{}
% First names are abbreviated in the running head.
% If there are more than two authors, 'et al.' is used.
%
\institute{%
	Knowledge \& Data Engineering Group,
	University of Kassel, Germany
	\and
	Interdisciplinary Research Center for Information System Design,
	\mbox{University of Kassel, Germany}
}
	%\email{\{abdulla, hirth, stumme\}}@cs.uni-kassel.de}
% \institute{Princeton University, Princeton NJ 08544, USA \and
	% Springer Heidelberg, Tiergartenstr. 17, 69121 Heidelberg, Germany
	% \email{lncs@springer.com}\\
	% \url{http://www.springer.com/gp/computer-science/lncs} \and
	% ABC Institute, Rupert-Karls-University Heidelberg, Heidelberg, Germany\\
	% \email{\{abc,lncs\}@uni-heidelberg.de}}
%
\maketitle              % typeset the header of the contribution
\begin{abstract}
	We introduce the Birkhoff completion as the smallest distributive lattice in which a given finite lattice can be embedded as semi-lattice. We discuss its relationship to implicational theories, in particular to R. Wille's simply-implicational theories. By an example, we show how the Birkhoff completion can be used as a tool for ordinal data science. 
	% and truncated distributive lattices
	
%	 In this paper we investigate different approaches for fixing distributivity within lattices,
%	 focusing on  Birkhoff's construction and the strategic removal of elements. We discuss the implications of these methods on lattice structure and properties, comparing their advantages and limitations.Our findings contribute to the understanding of lattice structures and provide insights into addressing distributivity within lattice-based frameworks.

	\keywords{Formal Concept Analysis \and Distributive Lattices \and Implicational Theories}
\end{abstract}

%%% Local Variables:
%%% mode: latex
%%% TeX-master: "2024"
%%% End:

\section{Introduction}
\label{sec:introduction}

Formal Concept Analysis (FCA) is a popular means for data analysis,
especially when at least one feature is not of nominal or interval
scale type. It is known since the beginning that the resulting concept
lattice can take the form of any finite\footnote{Throughout this paper
  we assume that our data --- the formal context --- are finite. If we
  would also allow for infinite contexts then the concept lattice can
  take the form of any complete lattice.} lattice. In real-world
applications, though, surprisingly often the concept lattice carries
more structure: being a distributive lattice which eventually is ``cut
off'' at the bottom.  \forlongversion{With \emph{truncated
    distributive lattices}, R, Wille has provided a formalisation of
  this observation \cite{wille2003truncated} which we will recall
  later in Section~\ref{sec-truncated-BC}.}

The likely reason for this unexpected emergence of distributivity is
related to the fact that a concept lattice is distributive if and only
if the implicational logic of its attributes can be described
completely by implications that have only one attribute in their
premise. The distributivity thus results from the prevalence of
implications with one-element premises. According to Wille (2003)
\cite{wille2003truncated} this is caused ``by the nature of human
thought namely that everyday thinking predominantly uses logical
inferences with one element premises.''

If a concept lattice is distributive, this comes with several
advantages. Besides a simpler --- and easier to communicate ---
implicational logic, one benefits from the fact that the concept
lattice can be decomposed into chains, which support an easier
interpretation of the data and a natural grid-like visualisation by an
embedding into the products of these chains. Furthermore the lattice
is isomorphic to the order ideals of the ordered set of its
meet-irreducible elements, which provides an interesting structure for
supporting the navigation in the concept lattice.

But what if the concept lattice of formal context is not (completely)
distributive? There are different possible explanations. The first is
of course that the data we are analysing have a more complex logical
structure. Another reason might be that the logic of our field of
interest has a simple implicational structure but that the formal
context which we have at hand is not perfectly representing it. It
might for instance miss some of the objects or attributes or contain
errors.

In this paper, we present a construction that may ``repair'' a
non-distributive concept lattice whenever we have reason to assume
that our context is not covering all potential objects of our domain
of interest. The underlying construction is not specific to FCA, and
all its building blocks --- in particular Birkhoff's representation
theorem for distributive lattices \cite{Birkhoff} --- are known in
lattice theory for a very long time. However, no-one has yet turned it
into a construction for embedding a non-distributive lattice into a
minimal distributive lattice.\footnote{For a reader with background in
  lattice theory or general algebra it will be immediately clear that
  such an embedding will only work if we do not require the embedding
  to preserve all lattice operations. We will discuss this in detail
  in Section~\ref{sec-fixing}.} Our proposed \emph{Birkhoff
  completion} is therefore not only of interest to FCA but can be seen
as a generic lattice-theoretical construction. We will therefore also
introduce the Birkhoff completion in purely lattice-theoretical terms
in Section~\ref{sec-fixing}. References to related work are
not collected in a separate section, but mentioned whenever
appropriate.  

All methods presented in this paper are implemented and provided by
the \texttt{conexp-clj} FCA research framework \cite{conexp-clj}.

%%% Local Variables:
%%% mode: latex
%%% TeX-master: "2024"
%%% End:

\section{Basics}
\label{sec:fca-basics}

In the following we recall some notions from order and lattice theory \cite{Birkhoff,Lattices}, and introduce some notations used in this work.

\iffalse A partially ordered set (or poset) $(P, \leq)$ consists of a nonempty set $P$ and a binary relation $\leq$ on $P$ such that $\leq$ satisfies the following properties:
\begin{enumerate}
	\item Reflexivity: $x \leq x$ for all $x \in P$;
	\item Antisymmetry: $x \leq y$ and $y \leq x$ implies $x = y$ for all $x, y \in P$;
	\item Transitivity: $x \leq y$ and $y \leq z$ implies $x \leq z$ for all $x, y, z \in P$.
\end{enumerate}

If $\leq$ satisfies the additional property:
\begin{enumerate}
	\setcounter{enumi}{3}
	\item Linearity: $x \leq y$ or $y \leq x$ for all $x, y \in P$,
\end{enumerate}
then $P$ is a total order or a chain. 
We write $x || y $ if $ x \nleq y \text{ and } y \nleq x$.
A linear extension of a partial order $\leq$ on a set $S$ is a total order $\leq_L$ on $S$ such that for all $x, y \in S$, if $x \leq y$ in the partial order $\leq$, then $x \leq_L y$ in the total order $\leq_L$. An element $a \in P$ is called an \emph{upper bound} for a subset $T \subseteq P$ if $a \geq x$ for all $x \in T$. If $a$ is the unique smallest upper bound of $T$, we say that $a$ is the \emph{supremum} of $T$.

Similarly, an element $b \in P$ is called a \emph{lower bound} for a subset $T \subseteq P$ if $b \leq x$ for all $x \in T$. If $b$ is the unique greatest lower bound of $T$, we say that $b$ is the \emph{infimum} of $T$.
Let \( P \) be an ordered set and let \( x, y \in P \),
we say \( x \) is covered by \( y \) , and write \( x \prec y \)  if
\( x < y \) and \( x \leq z < y \) implies \( z = x \).\fi

Let  \( P \) and \( Q \) be two partially ordered sets.
We say that \( P \) and \( Q \) are order-isomorphic, denoted by \( P \cong Q \), if there exists an injective map \( \varphi \) from \( P \) onto \( Q \) such that \( x \leq y \) in \( P \) if and only if \( \varphi(x) \leq \varphi(y) \) in \( Q \). 
Such a map is called an \textit{order-isomorphism}.
\iffalse Let $P_1, \ldots, P_n$ be ordered sets. The Cartesian product $P_1 \times \ldots \times P_n$ can be made into an ordered set by imposing the coordinatewise order defined by
\[
(x_1, \ldots, x_n) \leq (y_1, \ldots, y_n) \iff \forall i \, (x_i \leq y_i \text{ in } P_i).
\]
\fi
Associated with any ordered set is the corresponding familiy of down-sets, which plays a major role in our paper.
Let $P$ be an ordered set and $Q \subseteq P$.
 We say that $Q$ is a \emph{down-set} (or \emph{$($order$)$ ideal}) if, whenever $x \in Q$, $y \in P$ and $y \leq x$, we have $y \in Q$. Dually, $Q$ is an \emph{up-set} (or \emph{$($order$)$ filter}) if, whenever $x \in Q$, $y \in P$ and $x \leq y$, we have $y \in Q$. The down-set of $x\in L$ is $\downarrow x :=\{a\in L \mid a\leq x\}$, and the  up-set of $x\in L$ is $\uparrow x :=\{a\in L \mid a\geq x\}$.
  We denote the family of the order ideals of $P$ by $\mathcal{I}(P)$, and the family of its order filters by $\mathcal{F}(P)$. They are both lattices, under the inclusion order.

A partially ordered set $L$ is called a \emph{lattice} if, for any two elements $x,y \in L$, their infimum $x\wedge y$ and supremum $x\vee y$ exist in $L$.
\iffalse $L$ is called a \emph{complete lattice} if  all subsets $X\subseteq L$ have an infimum $\bigwedge X$ and a supremum $\bigvee X$ in $L$.
\fi In a lattice, infimum and supremum are also called \emph{meet} and \emph{join}, resp.
A join-semilattice  is a partially ordered set that has a join for any nonempty finite subset. Dually, a meet-semilattice  is a partially ordered set which has a meet for any nonempty finite subset. An element $x$ of a (finite) lattice $L$ is called \emph{meet-irreducible} if $\bigvee X = x$ implies $x\in X$ for all $X\subseteq L$. Dually, $x$ is called \emph{join-irreducible} if $\bigwedge X = x$ implies $x\in X$ for all $X\subseteq L$. The set of all meet-irreducible elements of $L$ is denoted by $\mathcal{M}(L)$, and the set of all its join-irreducibles is denoted by $\mathcal{J}(L)$.
\iffalse A bounded lattice is a lattice $L$ that additionally has a greatest element (also called maximum, or top element, and denoted by $1$ ) and a least element (also called minimum, or bottom, denoted by $0$ ), which satisfy:
\[
0 \leq x \leq 1 \quad \text{for every } x \in L.
\]
 If \( L \) is a lattice with least element \( 0 \),then \( a \in L \) is called an \emph{atom} if \( 0 \prec a \). 

If \( (L, \vee, \wedge) \) is a bounded lattice, then we say that $y \in L$ is a complement of $x \in L$ if $x \land y = 0$ and $x \lor y = 1$. In this case, we say that $x$ is a complemented element of $L$. Clearly, every complement of a complemented element is itself complemented.
\fi We call $S \neq \emptyset$ a \emph{sublattice} of $L$ if $(a,b\in S\Rightarrow a\vee b\in S \text{ and } a\wedge b\in S)$ holds.
\iffalse Let \( L = (L, \vee, \wedge) \) be a lattice and $a, b \in L$. Then the interval $[a, b]$  is defined as follows:
\[ [a, b] = \{x \in L : a \leq x \leq b\} \]
An element \( z \) of \( [a, b] \) is called a relative complement of \( x \) in \( [a, b] \) if \( x \vee z = b \) and \( x \wedge z = a \), i.e., \( z \) is a complement of \( x \) in the sublattice \( ([a, b], \vee, \wedge) \) of \( L \).

 Let $L$ and $K$ be lattices. 
Define $\lor$ and $\land$ coordinate-wise on $L \times K$, as follows:

\[
(\ell_1, k_1) \lor (\ell_2, k_2) = (\ell_1 \lor \ell_2, k_1 \lor k_2)
\]

\[
(\ell_1, k_1) \land (\ell_2, k_2) = (\ell_1 \land \ell_2, k_1 \land k_2)
\]
\fi

Let $L$ and $K$ be lattices. A map $f : L \rightarrow K$ is said to be a lattice homomorphism iff for all $a, b \in L$,
$f(a \lor b) = f(a) \lor f(b)$ and $f(a \land b) = f(a) \land f(b)$. A bijective lattice homomorphism is a lattice isomorphism. If $f : L \rightarrow K$ is a one-to-one homomorphism, then the sublattice $f(L)$ of $K$ is isomorphic to $L$ and we refer to $f$ as an embedding (of $L$ into $K$).

A map $\phi : L \rightarrow K$ between two $($finite$)$ lattices $L$ and $K$ is called join-preserving if $\phi(\bigvee X) = \bigvee \phi(X)$ holds for all $X \subseteq L$. Dually,$\phi$ is  meet preserving if   $\phi(\bigwedge X) = \bigwedge \phi(X)$ holds for all $X \subseteq L$.
\iffalse While distributive lattices serve as our primary tool for understanding structural relationships within partially ordered sets the concept of modularity introduces a fundamental aspect that enriches our understanding further.
 \begin{definition}\label{def:modular}
	A lattice $L$ is said to be modular if for all $x, y, z \in L$:
	\[
	x \leq z \Rightarrow x \lor (y \land z) = (x \lor y) \land z
	\]
	
\end{definition}
 In nonmodular lattices, we shall be interested in modular pairs and dual modular pairs, which were introduced by Wilcox [1938], [1939]. We say that an ordered pair \( (a, b) \) of elements of a lattice \( L \) is a modular pair if for all \( c \in L \), 
\[ c \leq b \implies c \vee (a \wedge b) = (c \vee a) \wedge b. \]

We say that \( (a, b) \) is a dual modular pair if for all \( c \in L \), 
\[ c \geq b \implies c \wedge (a \vee b) = (c \wedge a) \vee b. \]

\begin{figure}[h]
	% \begin{center}
		\centering
		\begin{minipage}{0.3\textwidth} 
			\centering 
\input{figs/N5}
			
		\end{minipage} 
		%\end{center}
		\caption{The nondistributive lattice $N_5$}
		% \begin{center}
			\centering
			
			%\end{center}	
		\end{figure}
		\fi
		Distributivity in a lattice governs how the meet and join operations interact, ensuring consistency in lattice transformations.
				\begin{definition}\label{def:distributive}
			A lattice $L$ is said to be distributive if for all $x, y, z \in L$:
			\[
			x \land (y \lor z) = (x \land y) \lor (x \land z)
			\]
			
		\end{definition}
		\iffalse 
		\begin{figure}[h]
			% \begin{center}
				\centering
				\begin{minipage}{0.3\textwidth}
					\centering
					\input{figs/M3}
				\end{minipage}
				%\end{center}
				\caption{The nondistributive  lattice $M_3$}
				% \begin{center}
					\centering
					
					%\end{center}	
				
				\end{figure}
				\fi 
				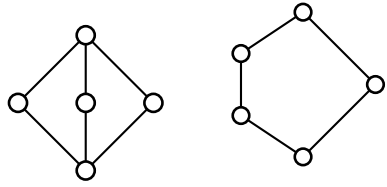
\begin{figure}[t]
					% \begin{center}
						\centering
						\begin{minipage}{0.3\textwidth}
							\centering
												{\unitlength 0.6mm
	\begin{picture}(60,40)%
		\put(0,0){%
			\begin{diagram}{60}{60}
				\Node{1}{30}{0}
				\Node{2}{15}{15}
				\Node{3}{30}{15}
				\Node{4}{45}{15}
				\Node{5}{30}{30}
				\Edge{1}{2}
				\Edge{1}{3}
				\Edge{1}{4}
				\Edge{2}{5}
				\Edge{3}{5}
				\Edge{4}{5}
				%\Numbers\Edge{2}{3}
				\leftObjbox{2}{3}{1}{}
				\rightObjbox{3}{3}{1}{}
				\leftAttbox{2}{3}{1}{}
				\rightAttbox{3}{3}{1}{}
				\rightAttbox{4}{3}{1}{}
		\end{diagram}}
\end{picture}}			
						\end{minipage}
						\begin{minipage}{0.3\textwidth}
							\centering
									{\unitlength 0.55mm
	\begin{picture}(60,40)%
		\put(0,0){%
			\begin{diagram}{60}{60}
%				\Node{1}{15}{0}
				\Node{1}{15}{5}
				\Node{2}{0}{15}
%				\Node{3}{30}{15}
				\Node{3}{32.5}{22.5}
				\Node{4}{0}{30}            
				\Node{5}{15}{40}     
				\Edge{1}{2}
				\Edge{1}{3}
				\Edge{2}{4}
				\Edge{4}{5}
				\Edge{3}{5}
				
				\leftObjbox{2}{3}{1}{}
				\leftObjbox{4}{3}{1}{}
				\rightObjbox{3}{3}{1}{}
				\rightObjbox{5}{3}{1}{}
				\leftAttbox{2}{3}{1}{}
				\leftAttbox{4}{3}{1}{}
				\rightAttbox{3}{3}{1}{}
				\rightAttbox{5}{3}{1}{}
				\rightAttbox{1}{3}{1}{}
				
		\end{diagram}}
\end{picture}}	
						\end{minipage}
						\caption{ The nondistributive  lattices $M_3$ and $N_5$}
				\end{figure}
				\begin{theorem}
					A lattice $L$ is distributive if and only if there exists no sublattice $A \subseteq L$ isomorphic to either $M_3$ or $N_5$.
				\end{theorem}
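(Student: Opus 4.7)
The plan is to prove both directions, the easy one by sublattice inheritance and the hard one by a case split on modularity.

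For the "if" direction, I would observe that distributivity is inherited by sublattices: if $L$ satisfies $x \wedge (y \vee z) = (x \wedge y) \vee (x \wedge z)$ for all triples, then so does any subset closed under $\wedge, \vee$. So it suffices to check directly that neither $M_3$ nor $N_5$ is distributive. In both lattices one picks the three coatoms (resp.\ the two incomparable middle elements together with the third one in $N_5$) and verifies that distributivity fails by a one-line computation in each diagram. Consequently, if $L$ contained a sublattice isomorphic to one of them, that sublattice would be non-distributive, and hence so would $L$.

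For the "only if" direction, I would argue contrapositively: assume $L$ is not distributive and exhibit a copy of $M_3$ or $N_5$. Split into two cases according to whether $L$ is modular.

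\emph{Case 1: $L$ is not modular.} Then there exist $x, y, z \in L$ with $x \leq z$ but $x \vee (y \wedge z) < (x \vee y) \wedge z$. I would propose the five elements
\[
y \wedge z,\quad x \vee (y \wedge z),\quad y,\quad (x \vee y) \wedge z,\quad x \vee y,
\]
and verify directly that they are pairwise distinct, that the two middle elements $x \vee (y \wedge z)$ and $(x \vee y) \wedge z$ are comparable (with strict inequality by assumption), that $y$ is incomparable to both of them, and that the meets/joins among any pair land back in the set. This gives a sublattice isomorphic to $N_5$.

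\emph{Case 2: $L$ is modular but not distributive.} Then there exist $a,b,c \in L$ with $(a\wedge b)\vee(a\wedge c) < a\wedge(b\vee c)$. I would set
\[
d = (a\wedge b)\vee(b\wedge c)\vee(c\wedge a),\quad e = (a\vee b)\wedge(b\vee c)\wedge(c\vee a),
\]
and
\[
a' = d \vee (a\wedge e),\quad b' = d \vee (b\wedge e),\quad c' = d \vee (c\wedge e),
\]
then show that $\{d, a', b', c', e\}$ forms a sublattice isomorphic to $M_3$. Using modularity, one computes $a'\wedge b' = a'\wedge c' = b'\wedge c' = d$ and $a'\vee b' = a'\vee c' = b'\vee c' = e$; distinctness of the five elements follows from the assumed failure of distributivity on $(a,b,c)$. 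The main obstacle is precisely this step: the symmetric computations collapsing the three pairwise meets and joins to $d$ and $e$ require repeated, careful use of the modular law, and one must check that the failure of distributivity really forces $d \ne a', b', c'$ and $a', b', c' \ne e$ — otherwise the candidate five-element set would degenerate to a chain rather than an $M_3$. Once both cases are handled, the contrapositive is complete.
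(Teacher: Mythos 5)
The paper states this result as known classical background (the $M_3$--$N_5$ theorem of Birkhoff and Dedekind) and supplies no proof of its own, so there is nothing internal to compare against. Your proposal is the standard textbook argument --- sublattice inheritance plus a direct check on $M_3$ and $N_5$ for the easy direction, and for the converse the pentagon construction $\{y\wedge z,\; x\vee(y\wedge z),\; y,\; (x\vee y)\wedge z,\; x\vee y\}$ in the non-modular case and the median construction $\{d,a',b',c',e\}$ in the modular case --- and it is correct, with the genuinely delicate steps (the repeated modular-law computations collapsing the pairwise meets and joins to $d$ and $e$, and the non-degeneracy of the five candidate elements) correctly identified rather than glossed over.
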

\section{Fixing non distributivity based on Birkhoff's Representation Theorem}
 \label{sec-fixing}

% One approach to address the lack of distributivity in a lattice $L$ is to consider its join-irreducible elements.
Our approach for fixing non-distributivity is based on Birkhoff's Representation Theorem \cite{Birkhoff}, which states that for any finite distributive lattice $L$, the lattice $\mathcal{I}(\mathcal{J}(L))$ of down-sets of the join-irreducible elements of $L$ is isomorphic to $L$. It's important to note that we focus on this variation of the theorem tailored for finite lattices, as our work primarily involves finite structures.

\begin{theorem}[Birkhoff's Representation Theorem]
	\label{theorem-birkhoff}

	If $L$ is a finite\footnote{This theorem holds --- like many of the statements in this paper --- also for specific infinite lattices, the so-called doubly founded lattices. As our focus is on applications in data analysis, we consider only the finite case in the sequel, which is easier to handle.} distributive lattice, then the map $\eta \colon L \rightarrow \mathcal{I}(J(L))$ defined by $\eta(a) = \{ x \in \mathcal{J}(L) \mid x \leq a \}$ is an isomorphism of $L$ onto $(\mathcal{I}(J(L)),\subseteq)$.
		Conversely, for every finite ordered set, the closure system of all order ideals is a  distributive lattice \( D \) in which
\[ \mathcal  J(D) = \{ \downarrow x \,|\, x \in P \} \]
is supremum-dense.

\end{theorem}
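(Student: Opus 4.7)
The plan is to establish the two directions separately and to lean on the fact that in any finite lattice every element is the join of the join-irreducibles below it (so the main work of the first direction is surjectivity, which is where distributivity actually enters).

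For the forward direction, I would first check that $\eta$ is well-defined: $\eta(a)$ is a down-set in $\mathcal{J}(L)$ because if $x \in \eta(a)$ and $y \in \mathcal{J}(L)$ with $y \leq x$, then $y \leq a$, so $y \in \eta(a)$. Next I would verify that $\eta$ is an order-embedding. The implication $a \leq b \Rightarrow \eta(a) \subseteq \eta(b)$ is immediate. Conversely, assume $\eta(a) \subseteq \eta(b)$. Since $L$ is finite, every element is the supremum of the join-irreducibles lying below it, in particular $a = \bigvee \eta(a) \leq \bigvee \eta(b) = b$. Injectivity follows from antisymmetry. The main step is surjectivity. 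Given a down-set $I \subseteq \mathcal{J}(L)$, set $a \coloneqq \bigvee I$. Then $I \subseteq \eta(a)$ trivially. For the converse, take $x \in \eta(a)$, i.e.\ a join-irreducible with $x \leq a = \bigvee I$. By distributivity,
\[
  x \;=\; x \wedge \bigvee I \;=\; \bigvee_{y \in I}(x \wedge y).
\]
Since $x$ is join-irreducible, some $x \wedge y$ must equal $x$, hence $x \leq y$ for this $y \in I$, and because $I$ is a down-set of $\mathcal{J}(L)$ we conclude $x \in I$. This is the only place distributivity is truly needed, and it will be the heart of the argument.

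For the converse direction, I would first observe that $\mathcal{I}(P)$ is closed under arbitrary unions and intersections, so it is a complete lattice under $\subseteq$ with these set-theoretic operations; distributivity of $\mathcal{I}(P)$ is inherited from the distributivity of $(\mathcal{P}(P),\cup,\cap)$. It remains to identify the join-irreducibles. For each $x \in P$, $\downarrow x$ is join-irreducible in $\mathcal{I}(P)$: if $\downarrow x = I_1 \cup I_2$ for down-sets $I_j$, then $x$ lies in some $I_j$, and then $\downarrow x \subseteq I_j \subseteq \downarrow x$. Conversely, any down-set $I$ equals $\bigcup_{x \in I}\downarrow x$, which gives both supremum-density of $\{\downarrow x \mid x \in P\}$ and shows that every join-irreducible down-set is already of the form $\downarrow x$. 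Finally, $x \mapsto \downarrow x$ is injective on $P$, so $\mathcal{J}(D)$ is exactly the claimed set.

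The main obstacle is the surjectivity step in the forward direction, because it is the unique point where a structural hypothesis (distributivity) is essential; everything else is bookkeeping that works in any finite lattice or follows from standard properties of down-sets. The rest of the argument is elementary once one has in hand the fact that each element of a finite lattice is the join of the join-irreducibles below it.
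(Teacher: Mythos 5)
Your proof is correct and is the standard textbook argument; the paper itself does not prove this theorem but defers to Ganter and Wille (1999), and your argument is essentially the one found there (and in Davey--Priestley). The key points --- that distributivity is needed exactly once, to show a join-irreducible $x \leq \bigvee I$ already lies in the down-set $I$, and that everything else holds in any finite lattice --- are exactly right.
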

We refer to Ganter and Wille (1999) \cite{fca-book} for the proof.

Although this theorem is usually based on the down-sets of the join-irreducible elements of $L$, the same also holds for the up-sets of the meet-irreducible elements of $L$: 
We have $ (\mathcal{I}(\mathcal{J}(L)), \subseteq) \cong (\mathcal{F}(\mathcal{M}(L)), \supseteq) $ for distributive lattices
because in that case $(\mathcal{J}(L),\subseteq)\cong(\mathcal{M}(L),\subseteq)$ holds (see Exercise 5.7  in Davey and Priestley, 2002 \cite{Lattices}) and because the complement of an order filter of an ordered set is always an order ideal and vice versa. 
%	For any finite distributive lattice \( L \), by Corollary 1,   Consequently, we conclude that $ (J(L),\leq) \cong ( M(L) ,\leq)$ if $L$ is  a finite distributive lattice. 
As we want to link our work with observations about the implications between attributes in a formal context in Section~\ref{sec-birkhoff-context}, we prefer to continue with the equivalent dual representation: 

\begin{theorem}[Dual version of Birkhoff's Representation Theorem]
	\label{theorem-dual-birkhoff}
	If $L$ is a finite distributive lattice, then the map $\eta \colon L \rightarrow \mathcal{F}(\mathcal M(L))$ defined by $\eta(a) = \{ x \in \mathcal{M}(L) \mid x \geq a \}$ is an isomorphism of $L$ onto $(\mathcal{F}(\mathcal M(L)),\supseteq).$
	Conversely, for every finite ordered set $P$, the closure system of all order filters is a  distributive lattice in which
	\[ \mathcal M(D) = \{ \uparrow x \,|\, x \in P \} \]
	is infimum-dense.
\end{theorem}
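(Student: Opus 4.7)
The plan is to verify $\eta$ directly, although an alternative would be to compose Theorem~\ref{theorem-birkhoff} with the two natural bijections mentioned just above the statement---the order-isomorphism $(\mathcal{J}(L),\leq)\cong(\mathcal{M}(L),\leq)$ available for finite distributive lattices, and the order-anti-isomorphism between $(\mathcal{I}(P),\subseteq)$ and $(\mathcal{F}(P),\supseteq)$ given by set complementation---which also yields the claim.

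For the forward direction, $\eta(a)$ is an order filter in $\mathcal{M}(L)$ because $\{x : x\geq a\}$ is upward closed. Order-reversal and injectivity of $\eta$ both follow from the fact that, in any finite lattice, every element is the meet of the meet-irreducibles above it, so $a = \bigwedge \eta(a)$ (this step does not need distributivity). For surjectivity, given $F\in\mathcal{F}(\mathcal{M}(L))$, I set $a:=\bigwedge F$ and check $\eta(a)=F$. The inclusion $F\subseteq\eta(a)$ is immediate. The reverse inclusion is where distributivity enters decisively: if $x\in\mathcal{M}(L)$ with $x\geq\bigwedge F$, I use that in a finite distributive lattice every meet-irreducible is meet-prime, hence $x\geq y$ for some $y\in F$, and thus $x\in F$ since $F$ is an order filter. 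Putting the three parts together, $\eta$ is a bijection that reverses order, i.e. an isomorphism $L\to(\mathcal{F}(\mathcal{M}(L)),\supseteq)$.

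For the converse, I would observe that $\mathcal{F}(P)$ is closed under both intersections and unions of up-sets, hence is a sublattice of the Boolean lattice $(2^P,\subseteq)$ and is therefore distributive. Reversing the order, meets in $(\mathcal{F}(P),\supseteq)$ become set unions. A principal filter $\uparrow x$ is then meet-irreducible in $(\mathcal{F}(P),\supseteq)$, since $\uparrow x = F_1 \cup F_2$ forces $x\in F_i$ and hence $\uparrow x \subseteq F_i$ for some $i$; conversely, any non-principal filter $F$ splits nontrivially as the union of the principal filters generated by its minimal elements, so it is not meet-irreducible. Thus $\mathcal{M}(D) = \{\uparrow x \mid x\in P\}$, and infimum-density is just the statement that every filter equals the intersection of the principal filters generated by its elements, which is immediate.

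The main obstacle is the surjectivity step for $\eta$, where distributivity is genuinely required via meet-primeness of meet-irreducibles; everything else is either straightforward poset bookkeeping or inherited cleanly from the Boolean structure of $2^P$.
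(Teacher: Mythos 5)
Your proof is correct, but it is not the route the paper takes. The paper gives no direct verification at all: it obtains the dual statement from \cref{theorem-birkhoff} (whose proof it delegates to Ganter and Wille) by composing the order-isomorphism $(\mathcal{J}(L),\leq)\cong(\mathcal{M}(L),\leq)$, valid in finite distributive lattices, with the complementation anti-isomorphism between order ideals and order filters --- exactly the alternative you name in your opening sentence and then set aside. Your direct argument is self-contained and has the virtue of isolating precisely where distributivity is needed: injectivity and order-reversal of $\eta$ come from meet-density of $\mathcal{M}(L)$ in any finite lattice, while surjectivity hinges on the key lemma that meet-irreducibles are meet-prime in a distributive lattice (via $p = p\vee(a\wedge b) = (p\vee a)\wedge(p\vee b)$), and the converse half is inherited from the Boolean structure of $2^P$. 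What the paper's route buys is brevity and symmetry with the already-stated primal theorem; what yours buys is independence from the cited exercise and an explicit accounting of the role of distributivity. Two small points worth a sentence if you write this up: the meet-primeness induction needs $F\neq\emptyset$, and the empty filter is covered separately since $\bigwedge\emptyset=1\notin\mathcal{M}(L)$ gives $\eta(1)=\emptyset$; and when you show principal filters exhaust $\mathcal{M}(D)$ you should note that the empty filter (the top of $(\mathcal{F}(P),\supseteq)$) is excluded as the meet of the empty family.
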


\begin{figure}[h]
	% \begin{center}
		\centering
		\begin{minipage}{0.3\textwidth}
			\centering
						{\unitlength 0.6mm
	\begin{picture}(60,60)%
		\put(0,0){%
			\begin{diagram}{60}{60}
				\Node{1}{30}{0}
				\Node{2}{15}{15}
				\Node{3}{30}{15}
				\Node{4}{45}{15}
				\Node{5}{30}{30}
				\Edge{1}{2}
				\Edge{1}{3}
				\Edge{1}{4}
				\Edge{2}{5}
				\Edge{3}{5}
				\Edge{4}{5}
				%\Numbers\Edge{2}{3}
				\leftObjbox{2}{3}{1}{}
				\rightObjbox{3}{3}{1}{}
				\leftAttbox{2}{3}{1}{}
				\rightAttbox{3}{3}{1}{}
				\rightAttbox{4}{3}{1}{}

		\end{diagram}}
		\put(15,15){\ColorNode{gray}}
		\put(30,15){\ColorNode{gray}}
		\put(45,15){\ColorNode{gray}}
\end{picture}}			
		\end{minipage}
		\begin{minipage}{0.3\textwidth}
			\centering
		\end{minipage}
		\begin{minipage}{0.3\textwidth}
			\centering
						{\unitlength 0.6mm
	\begin{picture}(60,75)%
		\put(0,0){%
			\begin{diagram}{60}{75}
				\Node{1}{30}{0}
				\Node{2}{15}{15}
				\Node{3}{30}{15}
				\Node{4}{45}{15}
				\Node{5}{15}{30}
				\Node{6}{30}{30}
				\Node{7}{45}{30}
				\Node{8}{30}{45}
				\Edge{1}{2}
				\Edge{1}{3}
				\Edge{1}{4}
				\Edge{2}{5}
				\Edge{2}{6}
				\Edge{3}{5}
				\Edge{3}{7}
				\Edge{4}{6}
				\Edge{4}{7}
				\Edge{5}{8}
				\Edge{6}{8}
				\Edge{7}{8}
				%\Numbers\Edge{2}{3}
				\leftObjbox{2}{3}{1}{}
				\rightObjbox{3}{3}{1}{}
				\rightObjbox{4}{3}{1}{}
				\leftAttbox{5}{3}{1}{}
				\rightAttbox{6}{3}{1}{}
				\rightAttbox{7}{3}{1}{}
		\end{diagram}}
		\put(15,15){\ColorNode{gray}}
		\put(30,15){\ColorNode{gray}}
		\put(45,15){\ColorNode{gray}}
\end{picture}}			
		\end{minipage}
		\begin{minipage}{0.3\textwidth}
			\centering
						{\unitlength 0.6mm
	\begin{picture}(60,75)%
		\put(0,0){%
			\begin{diagram}{60}{75}
				\Node{1}{30}{0}
				\Node{2}{15}{15}
				\Node{3}{30}{15}
				\Node{4}{45}{15}
				\Node{5}{15}{30}
				\Node{6}{30}{30}
				\Node{7}{45}{30}
				\Node{8}{30}{45}
				\Edge{1}{2}
				\Edge{1}{3}
				\Edge{1}{4}
				\Edge{2}{5}
				\Edge{2}{6}
				\Edge{3}{5}
				\Edge{3}{7}
				\Edge{4}{6}
				\Edge{4}{7}
				\Edge{5}{8}
				\Edge{6}{8}
				\Edge{7}{8}
				%\Numbers\Edge{2}{3}
				\leftObjbox{2}{3}{1}{}
				\rightObjbox{3}{3}{1}{}
				\rightObjbox{4}{3}{1}{}
				\leftAttbox{5}{3}{1}{}
				\rightAttbox{6}{3}{1}{}
				\rightAttbox{7}{3}{1}{}
		\end{diagram}}
		\put(15,30){\ColorNode{gray}}
		\put(30,30){\ColorNode{gray}}
		\put(45,30){\ColorNode{gray}}
\end{picture}}			
		\end{minipage}
		%\end{center}
		\caption{ $M_3$ (left) , its down-set $BC$ (middle) and its up-set $BC$(Right).}
		
		  \label{fig-M3}
	\end{figure}

	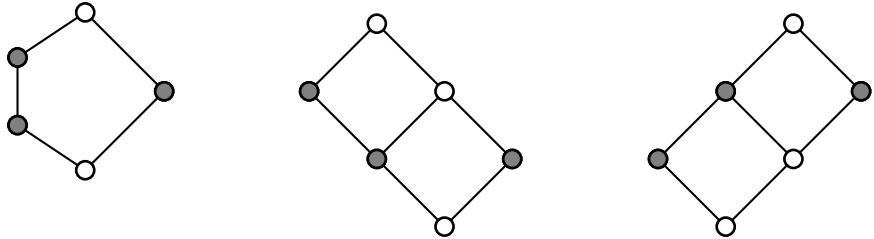
\begin{figure}[h]
		% \begin{center}
			\centering
			\begin{minipage}{0.3\textwidth}
				\centering
								{\unitlength 0.6mm
	\begin{picture}(60,60)%
		\put(0,0){%
			\begin{diagram}{60}{60}
%				\Node{1}{15}{0}
				\Node{1}{15}{5}
				\Node{2}{0}{15}
%				\Node{3}{30}{15}
				\Node{3}{32.5}{22.5}
				\Node{4}{0}{30}            
				\Node{5}{15}{40}     
				\Edge{1}{2}
				\Edge{1}{3}
				\Edge{2}{4}
				\Edge{4}{5}
				\Edge{3}{5}
				
				\leftObjbox{2}{3}{1}{}
				\leftObjbox{4}{3}{1}{}
				\rightObjbox{3}{3}{1}{}
				\rightObjbox{5}{3}{1}{}
				\leftAttbox{2}{3}{1}{}
				\leftAttbox{4}{3}{1}{}
				\rightAttbox{3}{3}{1}{}
				\rightAttbox{5}{3}{1}{}
				\rightAttbox{1}{3}{1}{}
		\end{diagram}}
		\put(0,15){\ColorNode{gray}}
%		\put(30,15){\ColorNode{gray}}
		\put(32.5,22.5){\ColorNode{gray}}
		\put(0,30){\ColorNode{gray}}
\end{picture}}			
			\end{minipage}
			\begin{minipage}{0.3\textwidth}
				\centering
			\end{minipage}
			\begin{minipage}{0.3\textwidth}
				\centering
								{\unitlength 0.6mm
	\begin{picture}(60,75)%
		\put(0,0){%
			\begin{diagram}{60}{75}
				\Node{1}{30}{0}
				\Node{2}{15}{15}
				\Node{3}{45}{15}
				\Node{4}{30}{30}
				\Node{5}{0}{30}
				\Node{6}{15}{45}

				\Edge{1}{2}
				\Edge{1}{3}
				\Edge{3}{4}
				\Edge{2}{4}
				\Edge{6}{5}
				\Edge{6}{4}
				\Edge{5}{2}

				%\Numbers\Edge{2}{3}
				
		\end{diagram}}
		\put(15,15){\ColorNode{gray}}
		\put(0,30){\ColorNode{gray}}
		\put(45,15){\ColorNode{gray}}

\end{picture}}			
			\end{minipage}
			\begin{minipage}{0.3\textwidth}
								{\unitlength 0.6mm
	\begin{picture}(60,75)%
		\put(0,0){%
			\begin{diagram}{60}{75}
				\Node{1}{30}{0}
				\Node{2}{15}{15}
				\Node{3}{45}{15}
				\Node{4}{30}{30}
				\Node{5}{45}{45}
				\Node{6}{60}{30}
				
				\Edge{1}{2}
				\Edge{1}{3}
				\Edge{3}{4}
				\Edge{2}{4}
				\Edge{5}{4}
				\Edge{5}{6}
				\Edge{6}{3}

				%\Numbers\Edge{2}{3}
				
		\end{diagram}}
		\put(15,15){\ColorNode{gray}}
		\put(30,30){\ColorNode{gray}}
		\put(60,30){\ColorNode{gray}}
\end{picture}}			

			\end{minipage}
			%\end{center}
			\caption{ $N_5$ (left), its down-set $BC$ (middle) and its up-set $BC$(Right).}
			
		\end{figure}

\begin{corollary}
	Let $L$ be a finite lattice. Then the following statements are equivalent:
	\begin{enumerate}[label=(\roman*)]
		\item $L$ is distributive.
		\item $L \cong (\mathcal{I}(\mathcal{J}(L)),\subseteq)$.
		\item $L \cong (\mathcal{F}(\mathcal{M}(L)),\supseteq)$.
		\item $L$ is isomorphic to the lattice of all down-sets of some finite ordered set.
		\item $L$ is isomorphic to the lattice of all up-sets of some finite ordered set.
	\end{enumerate}
\end{corollary}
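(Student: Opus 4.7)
My plan is to prove the corollary by closing two implication cycles that both pass through (i), namely $(i) \Rightarrow (ii) \Rightarrow (iv) \Rightarrow (i)$ and $(i) \Rightarrow (iii) \Rightarrow (v) \Rightarrow (i)$. Once both cycles are established, the five conditions are pairwise equivalent.

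For $(i) \Rightarrow (ii)$ and $(i) \Rightarrow (iii)$, I would simply invoke the forward halves of \cref{theorem-birkhoff} and \cref{theorem-dual-birkhoff} respectively; the explicit maps $\eta$ given there exhibit the required isomorphisms. The implications $(ii) \Rightarrow (iv)$ and $(iii) \Rightarrow (v)$ are immediate, by choosing the witnessing finite ordered set to be $\mathcal{J}(L)$, respectively $\mathcal{M}(L)$.

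The only step with genuine content is $(iv) \Rightarrow (i)$ (and symmetrically $(v) \Rightarrow (i)$). The key observation is that, for any finite ordered set $P$, both $\mathcal{I}(P)$ and $\mathcal{F}(P)$ are closed under finite unions and intersections: the union or intersection of a family of down-sets is again a down-set, and dually for up-sets. Consequently $\mathcal{I}(P)$ and $\mathcal{F}(P)$ are sublattices of the Boolean power-set lattice $(\mathcal{P}(P),\cup,\cap)$, which is distributive. Since distributivity is inherited by sublattices and preserved under isomorphism, any lattice isomorphic to $\mathcal{I}(P)$ or $\mathcal{F}(P)$ is itself distributive, establishing $(iv) \Rightarrow (i)$ and $(v) \Rightarrow (i)$.

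I do not expect a substantive obstacle here. The one minor caveat is the orientation in (iii) and (v): the filter lattice in (iii) is ordered by $\supseteq$, which swaps the roles of meet and join compared to $\subseteq$, but since distributivity is self-dual this does not affect the argument. In essence the proof is a direct assembly of \cref{theorem-birkhoff}, \cref{theorem-dual-birkhoff}, and the elementary fact that power-set lattices are distributive.
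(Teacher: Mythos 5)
Your proof is correct and takes essentially the same route the paper intends: the corollary is meant to follow immediately from Theorem~\ref{theorem-birkhoff} and Theorem~\ref{theorem-dual-birkhoff}, whose ``Conversely'' clauses already assert that the down-set and up-set lattices of any finite ordered set are distributive, which is exactly your $(iv)\Rightarrow(i)$ and $(v)\Rightarrow(i)$ step (you just make the sublattice-of-the-power-set argument explicit). The orientation caveat you note for (iii) and (v) is handled correctly, since distributivity is self-dual.
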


This implies that starting from any finite lattice $L$, one obtains two distributive lattices $\mathcal{I}(\mathcal{J}(L))$ and $\mathcal{F}(\mathcal{M}(L))$, which are isomorphic to $L$ if and only if $L$ is distributive. This gives rise to the following constructions. 

%Moreover, since $O(J(L))$ captures the essential distributive structure of $L$, many properties and concepts studied in the context of distributive lattices can be explored within $O(J(L))$.\doto{What do you mean by this?}

\begin{definition}
Let $(L,\leq)$ be a finite lattice. 
The \emph{down-set Birkhoff completion} of $L$ is the lattice $(\mathcal{I}(\mathcal J(L)),\subseteq)$. The \emph{up-set Birkhoff completion} of $L$ is the lattice $(\mathcal{F}(\mathcal M(L)),\supseteq)$. When we simply talk about the \emph{Birkhoff completion}, we refer to the latter, and denote it by $BC(L)$.
\end{definition}

\begin{remark}
	\label{remark-BC}
	In terms of FCA, these two constructions are isomorphic to the two concept lattices $\uB(\mathcal{J}(L), \mathcal{J}(L),\not\geq)$ and $\uB(\mathcal{M}(L), \mathcal{M}(L),\not\geq)$. In fact, the order ideals of $\mathcal{J}(L)$ are exactly the concept extents of $\uB(\mathcal{J}(L), \mathcal{J}(L),\not\geq)$, and the order filters of  $\mathcal{F}(L)$ are exactly the concept extents of $\uB(\mathcal{M}(L), \mathcal{M}(L),\not\geq)$ \cite{wille1985tensorial}. By abuse of notation, we will therefore also refer to these concept lattices as down-set Birkhoff completion and (up-set) Birkhoff completion, resp. 
\end{remark}

%Similarly, the Birkhoff completion on the dual lattice of \( L \), denoted by \( BC(L^\partial) \), can be defined as follows:

%\begin{definition}
%	Let \( L \) be a finite lattice. 
%	The Birkhoff completion on the dual lattice of \( L \), denoted by \( BC(L^\partial) \), is the lattice obtained by applying the Birkhoff construction \( \mathfrak{B}(M(L), M(L), \not\leq) \) to the meet-irreducible elements of \( L \).
%	\end{definition} 

The Birkhoff completion $BC(L)$ can be understood as a way to extend the original lattice $L$ in a somewhat minimal way to a distributive lattice while preserving its essential structure.\footnote{This is in a certain sense complementary to the construction of the free distributive completion of a partial complete lattice, which was introduced in Stumme (1998) \cite{stumme98free}.}

\begin{theorem}
  \label{theorem-embedding}
  Let $L$ be a finite lattice. 
  
  \begin{enumerate}
  \item $BC(L)$ is distributive.
  \item The map \[\iota	\colon L \rightarrow BC(L)\ :
      x \mapsto \enspace \uparrow x \cap \mathcal M(L)\]is a join-semilattice embedding.
    % Dually, the map  \[ \varphi_2: L \rightarrow BC(L^\partial)  \] \[ x \mapsto (L \setminus \downarrow_ x, \downarrow_x) \] is a join-preserving embedding.
  \item For any distributive lattice $\hat{L}$ for which exists a join-semilattice embedding $\varphi \colon L \rightarrow \hat{L}$, there exists a  join-semilattice embedding $\varepsilon\colon BC(L) \rightarrow \hat{L}$ such that $\varphi = \varepsilon \circ \iota$ holds.
  \end{enumerate}
\end{theorem}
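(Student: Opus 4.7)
My plan is to verify the three parts in order; only part (3) carries real content.

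Part (1) is immediate: with $P = \mathcal M(L)$, the lattice $BC(L) = (\mathcal F(\mathcal M(L)),\supseteq)$ is distributive by \cref{theorem-dual-birkhoff}.

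For part (2), I first check that $\iota(x) = \uparrow x \cap \mathcal M(L)$ is an order filter of $\mathcal M(L)$. Order-preservation is immediate: $x \le y$ in $L$ gives $\iota(y) \subseteq \iota(x)$, i.e.\ $\iota(x) \le_{BC} \iota(y)$. Order-reflection (and hence injectivity) relies on the standard fact that in a finite lattice every element is the meet of the meet-irreducibles above it, so $x = \bigwedge \iota(x)$; then $\iota(x) \supseteq \iota(y)$ yields $x \le y$. Join-preservation is a direct set-theoretic calculation: since meet under $\subseteq$ is join under $\supseteq$, we have $\iota(x) \vee_{BC} \iota(y) = \iota(x) \cap \iota(y) = \uparrow(x \vee y) \cap \mathcal M(L) = \iota(x \vee y)$.

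For part (3), I propose
\[
  \varepsilon(F) := \bigwedge \{\,\varphi(y) \mid y \in L,\ \iota(y) \subseteq F\,\},
\]
with the meet taken in $\hat L$. The factorisation $\varepsilon \circ \iota = \varphi$ follows from order-reflection of $\iota$: $\iota(y) \subseteq \iota(x)$ iff $y \ge x$, so the meet collapses to $\varphi(x)$ by monotonicity. Join-preservation of $\varepsilon$ combines the distributive identity $(\bigwedge A) \vee (\bigwedge B) = \bigwedge\{a \vee b \mid a \in A, b \in B\}$ valid in $\hat L$ with the relation $\iota(y \vee z) = \iota(y) \cap \iota(z)$ and $\varphi$ being join-preserving, giving $\varepsilon(F) \vee_{\hat L} \varepsilon(G) = \varepsilon(F \cap G) = \varepsilon(F \vee_{BC} G)$.

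The main obstacle is injectivity of $\varepsilon$. Assume $\varepsilon(F) = \varepsilon(G)$ and, WLOG, pick $m_0 \in G \setminus F$. Since $\iota(m_0) \subseteq G$, the defining meet yields $\varepsilon(F) = \varepsilon(G) \le \varphi(m_0)$, so it suffices to exhibit a meet-irreducible $\hat m \in \mathcal M(\hat L)$ with $\hat m \ge \varphi(m_0)$ but $\hat m \not\ge \varphi(y)$ for every $y$ with $\iota(y) \subseteq F$: meet-primeness of $\hat m$ in the distributive lattice $\hat L$ then gives $\hat m \not\ge \varepsilon(F)$, contradicting $\varepsilon(F) \le \varphi(m_0) \le \hat m$. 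Two observations simplify the search for $\hat m$. First, $\iota(y) \subseteq F$ forces $y \not\le m_0$ (else $m_0 \in \iota(y) \subseteq F$). Second, since $\varphi$ is join-preserving, the set $\{\,y \in L \mid \varphi(y) \le \hat m\,\}$ is a principal ideal $\downarrow\! y^\star(\hat m)$, so the requirement on $\hat m$ reduces to $y^\star(\hat m) = m_0$. If no meet-irreducible $\hat m \ge \varphi(m_0)$ satisfied $y^\star(\hat m) = m_0$, then every such $\hat m$ would have $y^\star(\hat m) > m_0$; since $m_0 \in \mathcal M(L)$ has a unique upper cover $m_0^{+}$, this would give $\hat m \ge \varphi(m_0^{+})$ for every such $\hat m$. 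Meet-density of $\mathcal M(\hat L)$ in the distributive $\hat L$ would then force $\varphi(m_0) \ge \varphi(m_0^{+})$, contradicting $m_0 < m_0^{+}$ via the order-reflection of $\varphi$.
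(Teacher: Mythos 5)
Your proposal is correct, and for parts (1), (2) and for the construction, factorisation and join-preservation steps of part (3) it follows essentially the same route as the paper: the paper also defines $\kappa(A)=\{x\in L\mid \iota(x)\subseteq A\}$ and $\varepsilon(A)=\bigwedge_{\hat L}\{\varphi(x)\mid x\in\kappa(A)\}$ and runs the same distributivity computation. The one place you genuinely diverge is the injectivity of $\varepsilon$. The paper picks $x\in A\setminus B$, observes that no $y\in\kappa(B)$ satisfies $\varphi(y)\le\varphi(x)$, and from this directly concludes $\varepsilon(B)\not\le\varphi(x)$; as stated, that inference is not valid in general, since in a (even distributive) lattice a meet $\bigwedge_i a_i$ can lie below an element $b$ without any single $a_i$ doing so --- the conclusion would be immediate only if $\varphi(x)$ were meet-prime in $\hat L$, which it need not be. Your argument supplies exactly the missing ingredient: you pass to a meet-irreducible (hence meet-prime) $\hat m\in\mathcal M(\hat L)$ above $\varphi(m_0)$ whose $\varphi$-preimage ideal is precisely $\downarrow m_0$, and you prove such an $\hat m$ exists by exploiting the unique upper cover $m_0^{+}$ of the meet-irreducible $m_0$ together with meet-density of $\mathcal M(\hat L)$ and order-reflection of $\varphi$. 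This costs you an extra half page but yields a watertight separation argument where the paper's own proof is, at best, elliptical; the intermediate facts you use (the preimage $\{y\mid\varphi(y)\le\hat m\}$ is a principal ideal because $\varphi$ preserves joins, and every $y$ with $\iota(y)\subseteq F$ satisfies $y\not\le m_0$) are all correctly justified. The only points worth spelling out in a final write-up are the set identity $\kappa(F)\cap\kappa(G)=\{y\vee z\mid y\in\kappa(F),\,z\in\kappa(G)\}$ behind your join-preservation step, and the remark that $m_0\in\mathcal M(L)$ is not the top element and therefore indeed has a unique upper cover.
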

% Remark: I renamed \varphi by \iota, and \varpsi by \varphi.
\begin{proof}
  \begin{enumerate}
  \item This follows directly from Theorem~\ref{theorem-dual-birkhoff}, since $\BC(L) = (\mathcal{F}(\mathcal{M}(L)),\supseteq) $.
  \item First, we note that for $a, b \in L$ the equality
    $\uparrow (a \vee b) = \uparrow a \:\cap \uparrow b$ holds. This
    implies that $\iota (a \vee b) = \iota (a) \cap \iota (b)$. Thus,
    $\iota$ is joins preserving. In a finite lattice every element
    $x\in L$ can be uniquely identified by the set of all
    meet-irreducible elements $m\in \mathcal M(L)$ for which $m \geq x$, i.e.,
    $x=\bigwedge \{m\in \mathcal M(L)\mid m\geq x\}$. Thus, is for $x,y\in L$
    with $x\neq y$ the image $\iota(x)\neq \iota(y)$. Concluding,
    $\iota$ is injective and therefore a join-semilattice embedding.
  \item Let $\kappa: \BC(L)\to L$ with
    $\kappa(A) \coloneqq \{ x \in L\mid \iota(x) \subseteq A \}$ and
    $\varepsilon: \BC(L)\to \hat{L}$ with
    $\varepsilon(A) \coloneqq \bigwedge_{\hat{L}} \{ \varphi(x) \mid x
    \in \kappa(A) \}$ be two maps. For $x,y\in L$, we find that
    $\iota(y)\subseteq \iota(x) \iff y \geq x$. Combined with
    the fact that $\varphi$ is an order-embedding we follow that
    $(\varepsilon \circ \iota) (x) = \varphi(x)$.

    Next, we want to show that $\varepsilon$ is join-preserving, i.e.,
    $ \varepsilon(A \vee_{\BC(L)} B) = \varepsilon(A)
    \vee_{\hat{L}}\varepsilon(B)$. For this, we first transform both
    sides of the equation to then show their equivalence:
    \begin{align*}
      \varepsilon(A) \vee_{\hat{L}} \varepsilon(B) &= \bigwedge_{\hat{L}} \{ \varphi(x) \mid x \in \kappa(A) \} \vee_{\hat{L}} \bigwedge_{\hat{L}} \{ \varphi(y) \mid y \in \kappa(B) \}& \\
      \text{(distributivity of $\hat{L})$} \quad         &= \bigwedge_{\hat{L}} \{ \varphi(x) \vee_{\hat{L}} \varphi(y) \mid x \in \kappa(A), y \in \kappa(B) \} &  \\
      \text{($\varphi$ is join preserving)} \quad &= \bigwedge_{\hat{L}} \{ \varphi(x \vee y) \mid x \in \kappa(A), y \in \kappa(B) \} &  \\
                                                   &= \bigwedge_{\hat{L}} \{ \varphi(z) \mid z = x \vee y, x \in \kappa(A), y \in \kappa(B) \},&\\
       \varepsilon(A \vee_{\BC(L)} B) &= \bigwedge_{\hat{L}} \{ \varphi(x) \mid x \in \kappa(A\vee_{\BC(L)} B) \}&\\
                               &=\bigwedge_{\hat{L}} \{ \varphi(z) \mid z \in \kappa(A \cap B) \}& \\
      &= \bigwedge_{\hat{L}} \{ \varphi(z) \mid z \in \kappa(A) \cap \kappa(B) \}.& \\
    \end{align*}
    We proof their equality by showing for a $z\in L$ that
    $z\in \kappa(A) \cap \kappa(B)$ iff there exist
    $x\in \kappa(A),y\in \kappa(B)$ with
    $z=x\vee_{L}y$. [$\Rightarrow$] This direction holds for $x=z$ and
    $y=z$. [$\Leftarrow$] We can follow from $z= x\vee_{L}y$ that
    $z\geq x,y$. From the fact that $x\in \kappa(A)$,
    $y\in \kappa(B)$, we can follow that
    $\iota(z)\subseteq \iota(x) \subseteq A$ and that
    $\iota(z)\subseteq \iota(y) \subseteq B$. Thus, $z\in \kappa(A)$
    and $z\in \kappa(B)$. Concluding,
    $\varepsilon(A) \vee_{\hat{L}} \varepsilon(B) = \varepsilon(A
    \vee_{\BC(L)} B)$ and $\varepsilon$ is join-preserving.

    It remains to be shown that $\varepsilon$ is injective.  Let
    $A,B\in \BC(L)$ with $A\neq B$. From this we can follow WLOG that
    there exists an $x \in A$ with $x\not\in B$ and that there is no
    $y \in B$ with $y\leq x$ (otherwise $x$ would have been in
    $B$). This implies that $x\not\in \kappa(B)$.  From the fact that
    $\varphi$ is injective, we can follow that there is no
    $y\in \kappa(B)$ with $\varphi(x)=\varphi(y)$. Moreover, since
    $\varphi$ is an order embedding, we can follow that there is no
    $y \in B$ such $\varphi(y)\leq \varphi(x)$. Hence,
    $\varepsilon(B) \wedge_{\hat{L}} \varphi(x) \neq \varepsilon(B)$,
    but $\varepsilon(A) \wedge_{\hat{L}} \varphi(x) =
    \varepsilon(A)$. Concluding, $\varepsilon$ is injective and
    therefore a join-semilattice embedding.\qed
  \end{enumerate}
\end{proof}

Of course we cannot expect that we can always find a lattice embedding of $L$ into $BC(L)$ --- because then $BC(L)$ would have a non-distributive sub-lattice. Note that if we require the embedding to be an order embedding only, the embedding could be even smaller, as \cref{fig:smallest-order-embedding} shows. 

\begin{figure}[t]
  \centering
  {\unitlength 0.6mm
  \begin{picture}(60,120)%
    \put(0,0){%
      \begin{diagram}{60}{120}
        \Node{1}{30}{60}
        \Node{2}{15}{75}
        \Node{3}{45}{75}
        \Node{4}{15}{90}
        \Node{5}{30}{90}
        \Node{6}{45}{90}
        \Node{7}{30}{105}
        \Node{8}{30}{45}
        \Node{9}{15}{30}
        \Node{10}{45}{30}
        \Node{11}{15}{15}
        \Node{12}{30}{15}
        \Node{13}{45}{15}
        \Node{14}{30}{0}

        \Edge{1}{2}
        \Edge{1}{3}
        \Edge{2}{5}
        \Edge{3}{5}
        \Edge{4}{7}
        \Edge{4}{2}
        \Edge{3}{6}
        \Edge{5}{7}
        \Edge{6}{7}
        \Edge{1}{8}        
        \Edge{8}{9}
        \Edge{8}{10}
        \Edge{9}{11}
        \Edge{9}{12}
        \Edge{10}{12}
        \Edge{10}{13}
        \Edge{11}{14}
        \Edge{12}{14}
        \Edge{13}{14}
        
      \end{diagram}}
  \end{picture}
  \hspace{1cm}
  \begin{picture}(60,120)%
    \put(0,0){%
      \begin{diagram}{60}{120}
        \Node{1}{30}{60}
        \Node{2}{15}{75}
        \Node{3}{45}{75}
        \Node{4}{15}{90}
        \Node{5}{30}{90}
        \Node{6}{45}{90}
        \Node{7}{30}{105}
        \Node{8}{30}{45}
        \Node{9}{15}{30}
        \Node{10}{45}{30}
        \Node{11}{15}{15}
        \Node{12}{30}{15}
        \Node{13}{45}{15}
        \Node{14}{30}{0}

        \Node{15}{30}{30}
        \Node{16}{30}{75}

        \Edge{1}{2}
        \Edge{1}{3}
        \Edge{2}{5}
        \Edge{3}{5}
        \Edge{4}{7}
        \Edge{4}{2}
        \Edge{3}{6}
        \Edge{5}{7}
        \Edge{6}{7}
        \Edge{1}{8}        
        \Edge{8}{9}
        \Edge{8}{10}
        \Edge{9}{11}
        \Edge{9}{12}
        \Edge{10}{12}
        \Edge{10}{13}
        \Edge{11}{14}
        \Edge{12}{14}
        \Edge{13}{14}

        \Edge{11}{15}
        \Edge{13}{15}
        \Edge{8}{15}

        \Edge{1}{16}        
        \Edge{4}{16}        
        \Edge{6}{16}        
        
      \end{diagram}}
  \end{picture}
}	
%%% Local Variables:
%%% mode: latex
%%% TeX-master: "../2024"
%%% End:
  \caption{A lattice for which there exists an order embedding into a distributive lattice that has fewer elements than the up-set or down-set Birkhoff completion. }
  \label{fig:smallest-order-embedding}
\end{figure}
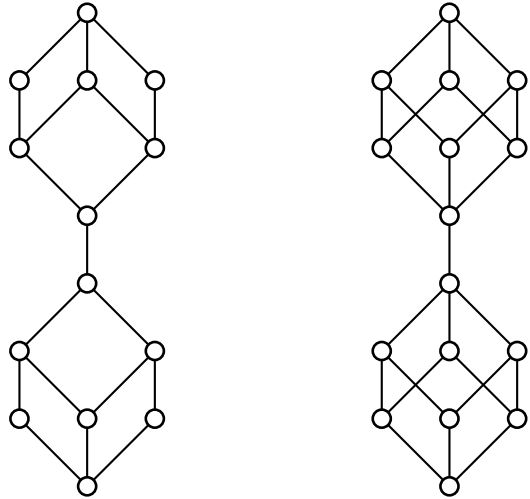

\begin{figure}[t]
  % \begin{center}
  \centering
		\begin{minipage}{0.3\textwidth}
			\centering
										{\unitlength 0.6mm
								\begin{picture}(60,60)%
									\put(0,0){%
										\begin{diagram}{60}{60}
											\Node{1}{30}{0}
											\Node{2}{15}{15}
											\Node{3}{45}{15}
											\Node{4}{15}{30}
											\Node{5}{30}{45}
											\Node{6}{45}{30}
											\Node{7}{30}{15}
											\Edge{1}{2}
											\Edge{1}{3}
											\Edge{4}{7}
											\Edge{4}{2}
											\Edge{3}{6}
											\Edge{6}{7}
											\Edge{1}{7}
											\Edge{5}{6}
											\Edge{5}{4}

			\leftObjbox{2}{3}{1}{J}
			\rightObjbox{3}{3}{1}{J}
			\rightAttbox{3}{3}{1}{M}
			\leftAttbox{4}{3}{1}{M}
			\leftAttbox{2}{3}{1}{M}	
			\rightObjbox{7}{3}{1}{J}
			\rightAttbox{6}{3}{1}{M}

											%\Numbers\Edge{2}{3}
											
									\end{diagram}}
									
							\end{picture}}			
						
		\end{minipage}
		\begin{minipage}{0.3\textwidth}
			\centering
						{\unitlength 0.6mm
	\begin{picture}(60,75)%
		\put(0,0){%
			\begin{diagram}{60}{75}
				\Node{1}{30}{0}
				\Node{2}{15}{15}
				\Node{3}{30}{15}
				\Node{4}{45}{15}
				\Node{5}{15}{30}
				\Node{6}{30}{30}
				\Node{7}{45}{30}
				\Node{8}{30}{45}
				\Edge{1}{2}
				\Edge{1}{3}
				\Edge{1}{4}
				\Edge{2}{5}
				\Edge{2}{6}
				\Edge{3}{5}
				\Edge{3}{7}
				\Edge{4}{6}
				\Edge{4}{7}
				\Edge{5}{8}
				\Edge{6}{8}
				\Edge{7}{8}
				%\Numbers\Edge{2}{3}
				\leftObjbox{2}{3}{1}{J}
				\rightObjbox{3}{3}{1}{J}
				\rightObjbox{4}{3}{1}{J}

		\end{diagram}}
		
\end{picture}}			
		\end{minipage}
		\begin{minipage}{0.3\textwidth}
			\centering
							{\unitlength 0.6mm
					\begin{picture}(60,75)%
										\put(0,0){%
											\begin{diagram}{60}{75}
												\Node{1}{30}{0}
												\Node{2}{15}{15}
												\Node{3}{45}{15}
												\Node{4}{30}{30}
												\Node{5}{45}{45}
												\Node{6}{60}{30}
												\Node{7}{30}{60}
												\Node{8}{15}{45}
												\Node{9}{0}{30}

												\Edge{1}{2}
												\Edge{1}{3}
												\Edge{3}{4}
												\Edge{2}{4}
												\Edge{5}{4}
												\Edge{5}{6}
												\Edge{6}{3}
												\Edge{9}{8}
												\Edge{7}{8}
												\Edge{6}{7}				  
												\Edge{9}{2}
												\Edge{3}{8}

			\leftAttbox{8}{3}{1}{M}
			\rightAttbox{5}{3}{1}{M}
			\rightAttbox{6}{3}{1}{M}	
			\leftAttbox{9}{3}{1}{M}	

										\end{diagram}}

								\end{picture}}		
		\end{minipage}
		\begin{minipage}{0.3\textwidth}
			\centering
		\end{minipage}
		\caption{ $L$ (left), its down-set $BC$ (middle) and its up-set $BC$ (right).}
		\label{fig-S7-completions}
\end{figure}
		
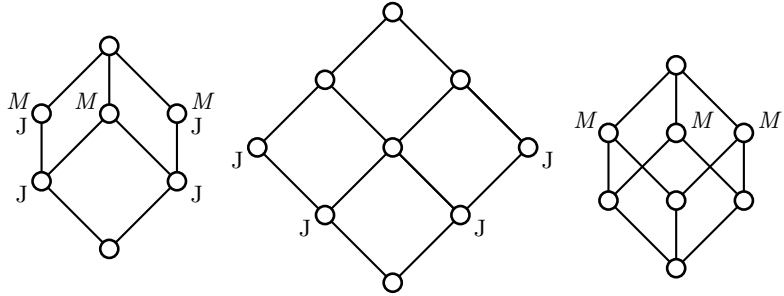
\begin{figure}
		\begin{minipage}{0.3\textwidth}
			\centering
										{\unitlength 0.6mm
								\begin{picture}(60,60)%
									\put(0,0){%
										\begin{diagram}{60}{60}
											\Node{1}{30}{0}
										\Node{2}{15}{15}
										\Node{3}{45}{15}
										\Node{4}{15}{30}
										\Node{5}{30}{30}
										\Node{6}{45}{30}
										\Node{7}{30}{45}
										\Edge{1}{2}
										\Edge{4}{2}
										\Edge{3}{1}
										\Edge{3}{5}
										\Edge{3}{6}
										\Edge{4}{7}
										\Edge{5}{7}
										\Edge{6}{7}
									    \Edge{2}{5}
										\leftObjbox{2}{3}{1}{J}
										\rightObjbox{3}{3}{1}{J}
										\leftAttbox{4}{3}{1}{M}
										\leftObjbox{4}{3}{1}{J}
										\leftAttbox{5}{3}{1}{M}
										\rightObjbox{6}{3}{1}{J}
										\rightAttbox{6}{3}{1}{M}

									\end{diagram}}

							\end{picture}}			
						
		\end{minipage}
		\begin{minipage}{0.3\textwidth}
			\centering
							{\unitlength 0.6mm
					\begin{picture}(60,75)%
										\put(0,0){%
											\begin{diagram}{60}{75}
												\Node{1}{30}{0}
												\Node{2}{15}{15}
												\Node{3}{45}{15}
												\Node{4}{30}{30}
												\Node{5}{45}{45}
												\Node{6}{60}{30}
												\Node{7}{30}{60}
												\Node{8}{15}{45}
												\Node{9}{0}{30}

												\Edge{1}{2}
												\Edge{1}{3}
												\Edge{3}{4}
												\Edge{2}{4}
												\Edge{5}{4}
												\Edge{5}{6}
												\Edge{6}{3}
												\Edge{9}{8}
												\Edge{7}{8}
												\Edge{6}{7}				  
												\Edge{9}{2}
												\Edge{3}{8}
												\leftObjbox{2}{3}{1}{J}
												\rightObjbox{3}{3}{1}{J}
												\rightObjbox{6}{3}{1}{J}
												\leftObjbox{9}{3}{1}{J}
												
										\end{diagram}}
										
								\end{picture}}		
		\end{minipage}
		\begin{minipage}{0.3\textwidth}
			\centering
						{\unitlength 0.6mm
	\begin{picture}(60,75)%
		\put(0,0){%
			\begin{diagram}{60}{75}
				\Node{1}{30}{0}
				\Node{2}{15}{15}
				\Node{3}{30}{15}
				\Node{4}{45}{15}
				\Node{5}{15}{30}
				\Node{6}{30}{30}
				\Node{7}{45}{30}
				\Node{8}{30}{45}
				\Edge{1}{2}
				\Edge{1}{3}
				\Edge{1}{4}
				\Edge{2}{5}
				\Edge{2}{6}
				\Edge{3}{5}
				\Edge{3}{7}
				\Edge{4}{6}
				\Edge{4}{7}
				\Edge{5}{8}
				\Edge{6}{8}
				\Edge{7}{8}
				%\Numbers\Edge{2}{3}
				\leftAttbox{5}{3}{1}{M}
				\rightAttbox{6}{3}{1}{M}
				\rightAttbox{7}{3}{1}{M}
		\end{diagram}}
		
\end{picture}}			
		\end{minipage}
		\caption{ $L ^\mathsf{\partial} $ (left), its down-set $BC$ (middle) and its up-set $BC$ (right).}
		\label{fig-S7dual-completions}
\end{figure}

We conclude this section by observing that, in general, the up-set and the down-set Birkhoff completions of a lattice are not isomorphic. For example, consider Figures~\ref{fig-S7-completions} and~\ref{fig-S7dual-completions}.
However, as the example shows, both types of completions are closely related: The down-set Birkhoff completion of a finite lattice $L$ is isomorphic to the dual of the (up-set) Birkhoff completion of the dual of $L$:

\begin{lemma}
	Let $L$ be a finite lattice. 
	Then ($\mathcal{I}(\mathcal{J}(L),\subseteq) ) \cong (\mathcal{F}(\mathcal{M}(L^\mathsf{\partial}),\supseteq))^\mathsf{\partial}$.
\end{lemma}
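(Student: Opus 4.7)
The plan is to unwind the two layers of duality until both sides are revealed to consist of the same subsets of $L$, ordered in the same way. The identity map on those subsets will then be the required isomorphism.

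First I would observe the key identification $\mathcal{M}(L^\partial) = \mathcal{J}(L)$ as sets: an element $x$ is meet-irreducible in $L^\partial$ exactly when it is join-irreducible in $L$, because meet and join are swapped by passing to the dual. Moreover, the partial order that $\mathcal{M}(L^\partial)$ inherits from $L^\partial$ is the reverse of the order that $\mathcal{J}(L)$ inherits from $L$. Consequently, a subset $Q \subseteq \mathcal{M}(L^\partial)$ is an order filter with respect to the $L^\partial$-order if and only if the same subset $Q \subseteq \mathcal{J}(L)$ is an order ideal with respect to the $L$-order. This yields the set-equality $\mathcal{F}(\mathcal{M}(L^\partial)) = \mathcal{I}(\mathcal{J}(L))$.

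Next I would handle the outer orders. The lattice on the right-hand side is $(\mathcal{F}(\mathcal{M}(L^\partial)),\supseteq)^\partial$; by definition of the dual, its order is the reverse of $\supseteq$, namely $\subseteq$. On the left-hand side we already have $\subseteq$. Combining this with the set-level identification from the previous step shows that both lattices have the same underlying set and the same order, so the identity map is an order-isomorphism, which in a lattice is automatically a lattice isomorphism.

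There is not really a main obstacle here; the entire argument is bookkeeping about what ``dual'' means at each occurrence. The only point where I would be careful is step one: explicitly checking that the order $\mathcal{M}(L^\partial)$ inherits from $L^\partial$ is exactly the opposite of the order $\mathcal{J}(L)$ inherits from $L$, so that ``filter in the dual poset'' really does translate to ``ideal in the original poset.'' Once this is pinned down, the chain of identifications
\[
(\mathcal{F}(\mathcal{M}(L^\partial)),\supseteq)^\partial
\;=\; (\mathcal{F}(\mathcal{M}(L^\partial)),\subseteq)
\;=\; (\mathcal{I}(\mathcal{J}(L)),\subseteq)
\]
completes the proof. \qed
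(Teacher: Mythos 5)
Your argument is correct and is essentially the paper's own proof, just written out in more detail: both rest on the identification of $\mathcal{M}(L^\partial)$ with $\mathcal{J}(L)$ under order reversal, the resulting coincidence of $\mathcal{F}(\mathcal{M}(L^\partial))$ with $\mathcal{I}(\mathcal{J}(L))$ as set families, and the observation that dualizing $\supseteq$ yields $\subseteq$. No further comment is needed.
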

\begin{proof}
    
	$	 
	(\mathcal J(L),\subseteq) \cong (\mathcal{M}
	(L^\mathsf{\partial}),\supseteq)\implies \\
	(\mathcal{I}(\mathcal J(L)),\subseteq) 
	\cong (\mathcal{F}(\mathcal{M}(L^\mathsf{\partial}),\subseteq))
	\cong (\mathcal{F}(\mathcal{M}(L^\mathsf{\partial}),\supseteq))^\mathsf{\partial}$.
\qed

\end{proof}

Note that the up-set and the down-set Birkhoff completions of a lattice $L$ are of course isomorphic whenever $L$ is distributive --- because in that case both are also isomorphic to $L$ itself. By duality, they are also isomorphic when $L$ is isomorphic to its dual, as shown in Figure~\ref{fig-M3}.

\section{The Birkhoff completion of a formal context}
  \label{sec-birkhoff-context}

We will now transfer the construction introduced in the last section to the case when we do not start with a lattice, but with a formal context. 

We start by recalling from Section \ref{sec-fixing} that for a finite lattice $L$ its Birkhoff completion can be understood as the concept lattice $\uB(\mathcal{M}(L),\mathcal{M}(L),\not\geq)$.
In the sequel, it will be convenient to differentiate, for any element of $\mathcal{M}(L)$, if it is considered in a specific situation as element of the object set or of the attribute set of the context $(\mathcal{M}(L), \mathcal{M}(L),\not\geq)$. Therefore, we will frequently write this context as $(\overline{\mathcal{M}(L)}, \mathcal{M}(L),\not\geq)$, where $\overline{\mathcal{M}(L)}$ contains a copy $\overline{m}$ of each $m\in\mathcal{M}(L)$. The motivation for this notation of negating $m$ is that, in the Birkhoff completion, for $m\in L$, the concept generated by the object $\overline{m}$ is the largest concept that does not have the attribute $m$ in its intent.

We can now define the Birkhoff completion of a formal context $\K:=(G,M,I)$ as extension of the set $G$ of objects by the additional objects $\overline{m}\in\overline{\mathcal{M}(M)}$. The incidence relation $I$ will be extended by $\not\geq$ for the new objects.  
It will be convenient, for $m,n\in M$, to write $m\geq_\K n$ iff $\{m\}'\supseteq \{n\}'$ in $\K$ (which is equivalent to $n$ implying $m$ in the context $\K$ ($\K\models n\to m$) and to $m\in \{n\}''$).

\begin{definition}
	\label{def-BC}
	Let $\K:=(G,M,I)$ be a finite formal context. 

	The \emph{Birkhoff completion of $\K$} is 
	\[BC(\K):=(G\cup\overline{\mathcal{M}(M)}, M, I\cup \{(\overline{m}, n)\in \overline{\mathcal{M}(M)}\times M \mid m\not\geq_\K n\})\]
	with $\overline{\mathcal{M}(M)}:= \{\overline{m} \mid m\in \mathcal{M}(M)\}$ where $\mathcal{M}(M)$ is the set of all meet-irreducible attributes of $\K$.
\end{definition}
The following theorem shows that this is indeed a reasonable definition. 
\begin{theorem}
	\label{theorem-context-BC}
	Let $\K$ be a finite formal context. Then $\uB(BC(\K))\cong BC(\uB(\K))$.
\end{theorem}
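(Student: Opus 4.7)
The plan is to show that both $\uB(\BC(\K))$ and $\BC(\uB(\K))$ are isomorphic to the common intermediate concept lattice $\uB(\mathcal{M}(M), \mathcal{M}(M), \not\geq_\K)$. For the $\BC(\uB(\K))$ side, the standard FCA bijection between the irreducible attributes of $\K$ and the meet-irreducible elements of $\uB(\K)$ is an order isomorphism $(\mathcal{M}(M), \geq_\K) \to (\mathcal{M}(\uB(\K)), \geq)$, so Remark~\ref{remark-BC} applied to $L = \uB(\K)$ gives $\BC(\uB(\K)) \cong \uB(\mathcal{M}(\uB(\K)), \mathcal{M}(\uB(\K)), \not\geq) \cong \uB(\mathcal{M}(M), \mathcal{M}(M), \not\geq_\K)$.

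For the $\uB(\BC(\K))$ side, I would observe that by Definition~\ref{def-BC} the sub-context of $\BC(\K)$ on objects $\overline{\mathcal{M}(M)}$ and attributes $\mathcal{M}(M)$ is literally $(\mathcal{M}(M), \mathcal{M}(M), \not\geq_\K)$ under the identification $\overline{m} \leftrightarrow m$. It thus suffices to show that every $g \in G$ and every $n \in M \setminus \mathcal{M}(M)$ is reducible in $\BC(\K)$, since removing reducible rows and columns preserves the concept lattice; combined with the previous paragraph this yields $\uB(\BC(\K)) \cong \uB(\mathcal{M}(M), \mathcal{M}(M), \not\geq_\K) \cong \BC(\uB(\K))$.

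The two reducibility verifications form the technical core. For $g \in G$ the intent of $g$ in $\BC(\K)$ is still $\{g\}'$ and the intent of $\overline{m}$ is $\{n \in M \mid m \not\geq_\K n\}$. Setting $S := \{m \in \mathcal{M}(M) \mid g \not I m\}$, the equality $\bigcap_{m \in S} \{\overline{m}\}^{I^+} = \{g\}'$ becomes, after complementing in $M$, the statement $M \setminus \{g\}' = \bigcup_{m \in S} \{n \in M \mid m \geq_\K n\}$. The $(\supseteq)$ inclusion holds because $M \setminus \{g\}'$ is downward closed in $(M, \geq_\K)$: if $g \not I n$ and $n \geq_\K n'$, then $\{n\}' \supseteq \{n'\}'$ forces $g \not I n'$. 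The $(\subseteq)$ inclusion uses the fact that $\{n\}' = \bigcap_{m \in \mathcal{M}(M),\, m \geq_\K n} \{m\}'$ for every $n \in M$, i.e., every attribute is equivalent (in terms of its derivation) to the conjunction of the irreducible attributes above it in $\geq_\K$; applied to $n \in M \setminus \{g\}'$, the non-incidence $g \not I n$ must arise from $g \not I m$ for some such $m$, yielding the required witness in $S$.

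For reducibility of $n \in M \setminus \mathcal{M}(M)$ I would take $Y := \{n' \in M \mid n' \geq_\K n,\, n' \neq n\}$ and verify $Y^{I^+} = \{n\}^{I^+}$. Restricted to the original objects this reduces to $Y' = \{n\}'$, which holds because any witness $Y_0 \subseteq M \setminus \{n\}$ to $n$'s reducibility in $\K$ is automatically contained in $Y$ (every $n' \in Y_0$ satisfies $\{n'\}' \supseteq Y_0' = \{n\}'$, hence $n' \geq_\K n$). On a new object $\overline{m}$ the required equivalence ``$m \not\geq_\K n$ iff $m \not\geq_\K n'$ for all $n' \in Y$'' follows by transitivity of $\geq_\K$ in one direction, and in the other from the fact that $m \in \mathcal{M}(M)$ and $n \notin \mathcal{M}(M)$ force $m \neq n$, so if $m \geq_\K n$ then $m$ itself lies in $Y$, giving the contradiction $m \not\geq_\K m$. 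I expect this last step to be the main obstacle, as it is the one place where the distinction between reducible and irreducible attributes enters essentially.
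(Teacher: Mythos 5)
Your proof is correct and follows essentially the same route as the paper: identify $BC(\uB(\K))$ with the concept lattice of $(\mathcal{M}(M),\mathcal{M}(M),\not\geq_\K)$ via the standard correspondence between irreducible attributes and meet-irreducible concepts, and then show that every original object $g\in G$ and every attribute $n\in M\setminus\mathcal{M}(M)$ is reducible in $BC(\K)$, using the same two key facts (the intent of $\overline{m}$ is $\{n\mid m\not\geq_\K n\}$, and every attribute extent is the intersection of the extents of the irreducible attributes above it). The only cosmetic difference is that you meet in the middle at the reduced context, whereas the paper traverses a chain of four contexts by adding back the reducible attributes and then the reducible objects.
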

\begin{proof}
	Let $\K:=(G,M,I)$ be a finite formal context. % and $L:=\uB(\K)$.  
	\begin{align}
				  & BC(\uB(\K)) = \uB(\mathcal{M}(\uB(\K)),\mathcal{M}(\uB(\K)),\not\geq)  \\
%				=\enspace & \uB(\mathcal{M}(L), \mathcal{M}(L), \not\geq)\\
	\cong\enspace & \uB\left(\enspace\enspace\overline{\mathcal{M}(M)}\enspace\enspace, \mathcal{M}(M),  \{(\overline{m},n) \in \overline{\mathcal{M}(M)}\times \mathcal{M}(M) \mid m\not\geq_\K n\}\right)
		\\
	\cong\enspace & \uB\left(\enspace\enspace\overline{\mathcal{M}(M)}\enspace\enspace, \quad M\quad, \enspace\{(\overline{m},n) \in \overline{\mathcal{M}(M)}\times \enspace M \enspace \mid m\not\geq_\K n\}\right) \label{line3}\\
	\cong\enspace & \uB\left(G\cup\overline{\mathcal{M}(M)}, \quad M\quad , I\cup \{(\overline{m}, n)\in \overline{\mathcal{M}(M)}\times M \mid m\not\geq_\K n \}\right) \label{line4}\\
	=\enspace & \uB\left(BC(\K)\right)
		\end{align}
		
	(1)$\,\cong\,$(2): The meet-irreducible elements of $\uB(\K)$ are exactly the attribute concepts of the attributes in $\mathcal{M}(M)$, and  $\mu(m)\not\geq \mu(n) \iff m\not\geq_\K n m$.
	
	(2)$\,\cong\,$(3): We show that any attribute $n\in M\setminus \mathcal{M}(M)$ is reducible in the context of line~\ref{line3}: Let $B:=\{m\in\mathcal{M}(M) \mid m\geq_\K n\}$. We show that $B^{\dag} = n^{\dag}$ holds in the context of line~\ref{line3}, where $\cdot^{\dag}$ is the derivation operator in this context: First we observe that $B$ is an order filter in $(\mathcal{M}(M), \leq)$, its complement $C:=\mathcal{M}(M)\setminus B$ is  an order ideal in $(\mathcal{M}(M), \leq)$, and it holds $C=B^\dag$. On the other hand, we also have $n^\dag = \{\overline m\in\mathcal{M}(M) \mid m\not\geq_\K n\} = \mathcal{M}(M)\setminus \{\overline m\in\mathcal{M}(M) \mid m\geq_\K n\} = \mathcal{M}(M) \setminus B = C$.

%	$\overline{m}\in B^\dag 
%	 \iff \forall b\in B\colon \overline{m}\in b^\dag
%	 \iff \forall b\in B\colon m\notin b'' 
%	 \iff \forall b\in B\colon b'\not\subseteq m'  
%\iff\todo{...} 
%	 \iff \neg \exists b\in B\colon m'\subseteq b' 
%	 \iff m' \notin \bigcap_{b\in B} b'
%\iff \todo{...} 
%     \iff B' \not\subseteq m'
%	 \iff n' \not\subseteq m' 
%	 \iff m\notin n'' 
%	 \iff \overline{m}\in n^\dag$. 
%	 
%	 \iff \forall b\in B\colon b\notin m'' 
%\iff \forall b\in B\colon m'\not\subseteq b'  
%\iff \neg \exists b\in B\colon m'\subseteq b' 
%\iff m' \notin \bigcap_{b\in B} b'
%\iff m' \notin n' 
%\iff n\notin m'' 
%\iff \overline{m}\in n^\dag$. 
			
	(3)$\,\cong\,$(4): We use $\cdot'$ as symbol for the derivation operator in $\K$ and $\cdot^!$ for the derivation operator in $BC(\K)$ (which is the context shown in line~\ref{line4}). Let $g\in G$. We show that $g$ is reducible in $BC(\K)$ by showing 
	$g^! = A^!$ with 
	$\overline{A}:= \{\overline{m}\in\overline{\mathcal{M}(M)} \mid \neg (g,m)\in I \}$: 
	$A^! = \bigcap_{\overline{m}\in \overline{A}} \overline{m}^! 
	     = \left\{ n\in M \mid \forall \overline{m} \in \overline{A}\colon m\not\geq_\K n \right\}
	     = \left\{ n\in M \mid \forall m\in  \mathcal{M}(M)\colon \neg (g,m)\in I \implies m\not\geq_\K n \right\}
	     = \{ n\in M \mid \forall m\in  \mathcal{M}(M)\colon $ $m\geq_\K n \implies (g,m)\in I \}
         = \{n\in M \mid (g,n)\in I \}
         = g' = g^!$\enspace.

	(4)$\,\cong\,$(5): By definition.
\qed
\end{proof}

\begin{example}
	The following example illustrates the construction. 
Figure~\ref{fig-british-isles} shows the complex usage of terms of the administrative geography of the United Kingdom \cite{wikipedia24britishisles}, resulting from a long (and often bloody) historical process. The concept lattice in Figure~\ref{fig-british-isles-lattice} is based on the formal context which has the five countries (England, Scotland, Wales, Northern Ireland and Ireland) and the three crown dependencies (Isle of Man, Jersey and Guernsey) in the British Isles as objects, and the geographical features (marked green in Figure~\ref{fig-british-isles}) `British Isles', `Ireland (Island)' and `Great Britain' as well as the legal distinctions (blue) `British Islands' and `United Kingdom' as attributes. We added the geographical feature `Channel Islands' as further attribute, as it is discussed as such in \cite{wikipedia24britishisles}.  

We note that the lattice is not completely distributive, as three of its sublattices are isomporphic to $N_5$. 
The up-set Birkhoff completion of the concept lattice is shown in Figure~\ref{fig-british-isles-upsetBC}. Following the construction of Definition~\ref{def-BC}, it contains an additional, artificial (and potentially interesting) object for each of the five attributes. 
The down-set Birkhoff completion is shown in Figure~\ref{fig-british-isles-downsetBC}. It contains an additional, artificial (and potentially interesting) attribute for each of the eight objects. In Section~\ref{sec-analysis}, we will discuss what we can learn from these completions about the administrative geography of the UK. 
\end{example}

\begin{figure}
 	\begin{center}
 		\includegraphics[width=7cm]{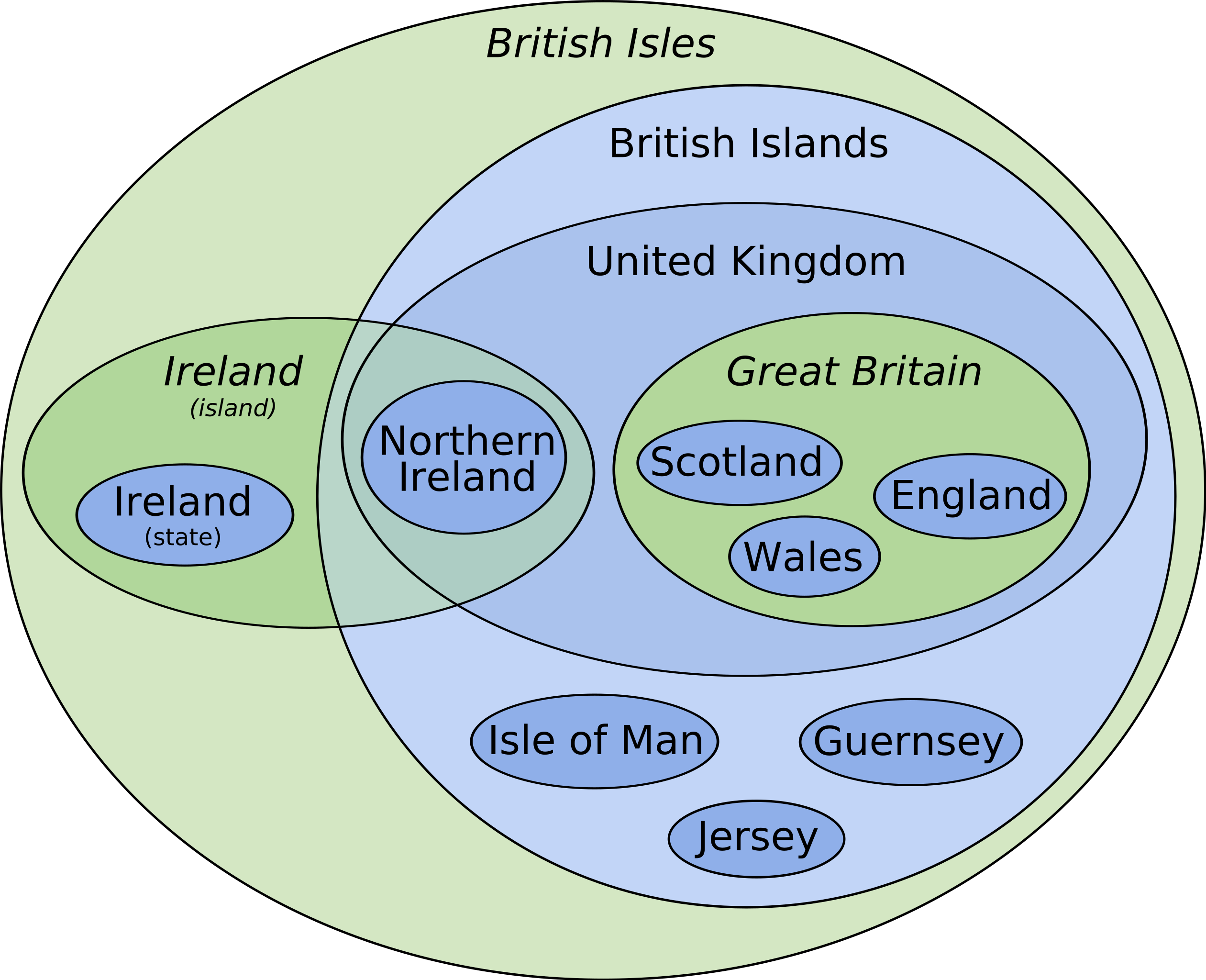}
 	\end{center}
	\caption{Administrative geography of the United Kingdom  \cite{wikipedia24britishisleseuler}.}
	\label{fig-british-isles}
\end{figure}

\begin{figure}
	\begin{center}
          \colorlet{mivertexcolor}{white}
\colorlet{jivertexcolor}{white}
\colorlet{vertexcolor}{white}
\colorlet{bordercolor}{black}
\colorlet{linecolor}{gray}
% parameter corresponds to the used valuation function and can be addressed by #1
\tikzset{vertexbase/.style 2 args={semithick, shape=circle, inner sep=2pt, outer sep=0pt, draw=bordercolor},%
  vertex/.style 2 args={vertexbase={#1}{}, fill=vertexcolor!45},%
  mivertex/.style 2 args={vertexbase={#1}{}, fill=mivertexcolor!45},%
  jivertex/.style 2 args={vertexbase={#1}{}, fill=jivertexcolor!45},%
  divertex/.style 2 args={vertexbase={#1}{}, top color=mivertexcolor!45, bottom color=jivertexcolor!45},%
  conn/.style={-, thick, color=linecolor}%
}
\begin{tikzpicture}[scale=2]
  \begin{scope} %for scaling and the like
    \begin{scope} %draw vertices
      \foreach \nodename/\nodetype/\param/\xpos/\ypos in {%
        0/vertex//0/0,
        1/divertex//0/1,
        2/jivertex//-1/1,
        3/divertex//1/1,
        4/mivertex//-0.5/1.5,
        5/divertex//-2/2,
        6/mivertex//0/2,
        7/vertex//-1/3
      } \node[\nodetype={\param}{}] (\nodename) at (\xpos, \ypos) {};
    \end{scope}
    \begin{scope} %draw connections
      \path (3) edge[conn] (6);
      \path (1) edge[conn] (4);
      \path (2) edge[conn] (5);
      \path (2) edge[conn] (4);
      \path (5) edge[conn] (7);
      \path (0) edge[conn] (3);
      \path (0) edge[conn] (1);
      \path (0) edge[conn] (2);
      \path (6) edge[conn] (7);
      \path (4) edge[conn] (6);
    \end{scope}
    \begin{scope} %add labels
      \foreach \nodename/\labelpos/\labelopts/\labelcontent in {%
        1/above//{Great Britain},
        3/above right//{Channel Islands},
        4/above//{United Kingdom},
        5/above//{Ireland (Island)},
        6/above//{British Islands},
%        7/above//{British Isles},
        1/below//{\parbox{2.5cm}{\baselineskip=0pt Wales, Scotland,\\ England}},
        2/below left//{Northern Ireland},
        3/below right//{\parbox{2.5cm}{\baselineskip=0pt Jersey, Isle of\\ Man, Guernsey}},
        5/below//{Ireland (State)}
      } \coordinate[label={[\labelopts]\labelpos:{\labelcontent}}](c) at (\nodename);
    \end{scope}
  \end{scope}
\end{tikzpicture}
%		\todo{Bild figs/British\_Isles\_lattice} %	\includegraphics[width=7cm]{figs/British_Isles_lattice}
%			\includegraphics[width=7cm]{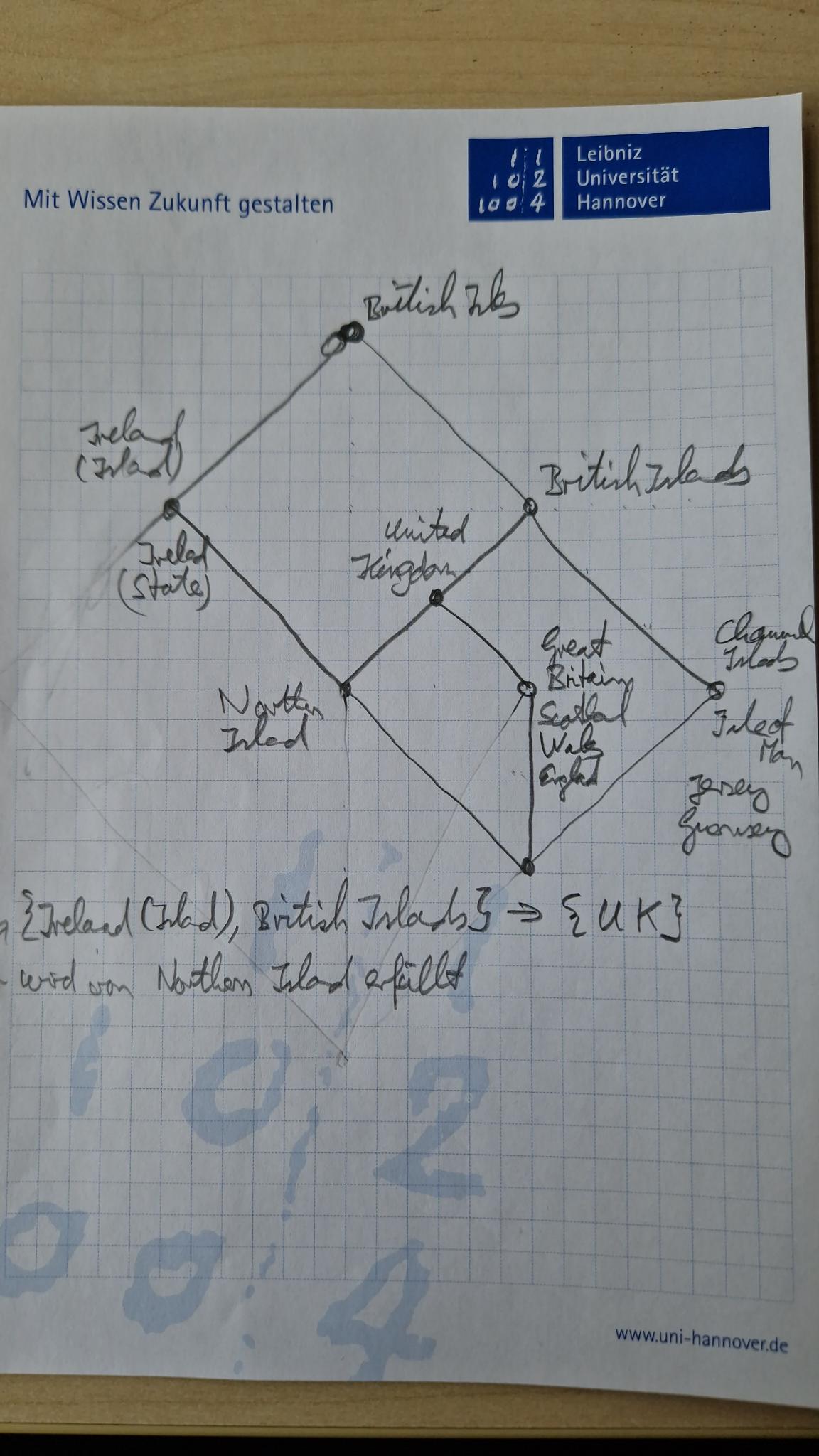}
	\end{center}
	\caption{Concept lattice about the administrative geography of the United Kingdom.}
	\label{fig-british-isles-lattice}
\end{figure}

\begin{figure}
	\begin{center}
          \colorlet{mivertexcolor}{white}
\colorlet{jivertexcolor}{white}
\colorlet{vertexcolor}{white}
\colorlet{bordercolor}{black}
\colorlet{linecolor}{gray}
% parameter corresponds to the used valuation function and can be addressed by #1
\tikzset{vertexbase/.style 2 args={semithick, shape=circle, inner sep=2pt, outer sep=0pt, draw=bordercolor},%
  vertex/.style 2 args={vertexbase={#1}{}, fill=vertexcolor!45},%
  mivertex/.style 2 args={vertexbase={#1}{}, fill=black!60},%
  jivertex/.style 2 args={vertexbase={#1}{}, fill=jivertexcolor!45},%
  divertex/.style 2 args={vertexbase={#1}{}, top color=mivertexcolor!45, bottom color=jivertexcolor!45},%
  conn/.style={-, thick, color=linecolor}%
}
\begin{tikzpicture}[scale=0.4]
  \begin{scope} %for scaling and the like
    \begin{scope} %draw vertices
      \foreach \nodename/\nodetype/\param/\xpos/\ypos in {%
        0/mivertex//-1./4,
        1/vertex//-1/8,
        2/vertex//4/10,
        3/vertex//-6/10,
        4/vertex//-1/12,
        5/vertex//4/14,
        6/mivertex//-6/14,
        7/mivertex//-1.0/15.0,
        8/mivertex//4/18,
        9/vertex//-6/18,
        10/mivertex//-1.0/19.0,
        11/mivertex//-6/22,
        12/mivertex//-1.0/23.0,
        13/mivertex//-1/27
      } \node[\nodetype={\param}{}] (\nodename) at (\xpos, \ypos) {};
    \end{scope}
    \begin{scope} %draw connections
      \path (12) edge[conn] (13);
      \path (7) edge[conn] (10);
      \path (3) edge[conn] (7);
      \path (3) edge[conn] (6);
      \path (6) edge[conn] (10);
      \path (6) edge[conn] (9);
      \path (10) edge[conn] (12);
      \path (9) edge[conn] (12);
      \path (9) edge[conn] (11);
      \path (0) edge[conn] (3);
      \path (0) edge[conn] (2);
      \path (0) edge[conn] (1);
      \path (8) edge[conn] (12);
      \path (5) edge[conn] (10);
      \path (5) edge[conn] (8);
      \path (11) edge[conn] (13);
      \path (2) edge[conn] (7);
      \path (2) edge[conn] (5);
      \path (1) edge[conn] (6);
      \path (1) edge[conn] (5);
      \path (1) edge[conn] (4);
      \path (4) edge[conn] (9);
      \path (4) edge[conn] (8);
    \end{scope}
    \begin{scope} %add labels
      \foreach \nodename/\labelpos/\labelopts/\labelcontent in {%
        7/above//{Great Britain},
        8/above right//{\parbox{1cm}{\baselineskip=0pt Channel\\ Islands}},
        10/above//{United Kingdom},
        11/above left//{Ireland (Island)},
        12/above//{British Islands},
%        13/above//{British Isles},
        1/below//{$\overline{\text{Great Britain}}$},
        2/below right//{\parbox{1cm}{\baselineskip=0pt $\overline{\text{Ireland}}$\\ (Island)}},
        3/below left//{\parbox{1cm}{\baselineskip=0pt $\overline{\text{Channel}}$\\ Islands}},
        4/below//{$\overline{\text{United Kingdom}}$},
        6/below left//{\parbox{1cm}{\baselineskip=0pt Northern\\ Ireland}},
        7/below//{\parbox{2.5cm}{\baselineskip=0pt Wales, Scotland,\\ England}},
        8/below right//{\parbox{2.2cm}{\baselineskip=0pt Jersey, Isle of Man, Guernsey}},
        11/below left//{\parbox{2.2cm}{\baselineskip=0pt Ireland (State),\\$\overline{\text{British Islands}}$}}
      } \coordinate[label={[\labelopts]\labelpos:{\labelcontent}}](c) at (\nodename);
    \end{scope}
  \end{scope}
\end{tikzpicture}
	\end{center}
	\caption{Up-set Birkhoff completion of the concept lattice in Figure~\ref{fig-british-isles-lattice}.}
	\label{fig-british-isles-upsetBC}
\end{figure}

\begin{figure}
	\begin{center}
          \colorlet{mivertexcolor}{white}
\colorlet{jivertexcolor}{white}
\colorlet{vertexcolor}{white}
\colorlet{bordercolor}{black!80}
\colorlet{linecolor}{gray}
% parameter corresponds to the used valuation function and can be addressed by #1
\tikzset{vertexbase/.style 2 args={semithick, shape=circle, inner sep=2pt, outer sep=0pt, draw=bordercolor},%
  vertex/.style 2 args={vertexbase={#1}{}, fill=vertexcolor!45},%
  mivertex/.style 2 args={vertexbase={#1}{}, fill=black!60},%
  jivertex/.style 2 args={vertexbase={#1}{}, fill=jivertexcolor!45},%
  divertex/.style 2 args={vertexbase={#1}{}, top color=mivertexcolor!45, bottom color=jivertexcolor!45},%
  conn/.style={-, thick, color=linecolor}%
}
\begin{tikzpicture}
  \begin{scope} %for scaling and the like
    \begin{scope} %draw vertices
      \foreach \nodename/\nodetype/\param/\xpos/\ypos in {%
        0/mivertex//1/12,
        1/mivertex//-1/14,
        2/mivertex//1/14,
        3/mivertex//3/14,
        4/vertex//-1/16,
        5/mivertex//-3/16,
        6/mivertex//1/16,
        7/vertex//3/16,
        8/vertex//-1/18,
        9/vertex//-3/18,
        10/mivertex//1/18,
        11/mivertex//-1/20
      } \node[\nodetype={\param}{}] (\nodename) at (\xpos, \ypos) {};
    \end{scope}
    \begin{scope} %draw connections
      \path (6) edge[conn] (10);
      \path (6) edge[conn] (8);
      \path (7) edge[conn] (10);
      \path (3) edge[conn] (6);
      \path (3) edge[conn] (7);
      \path (10) edge[conn] (11);
      \path (9) edge[conn] (11);
      \path (1) edge[conn] (6);
      \path (1) edge[conn] (4);
      \path (1) edge[conn] (5);
      \path (8) edge[conn] (11);
      \path (4) edge[conn] (10);
      \path (4) edge[conn] (9);
      \path (0) edge[conn] (3);
      \path (0) edge[conn] (1);
      \path (0) edge[conn] (2);
      \path (5) edge[conn] (9);
      \path (5) edge[conn] (8);
      \path (2) edge[conn] (7);
      \path (2) edge[conn] (4);
    \end{scope}
    \begin{scope} %add labels
      \foreach \nodename/\labelpos/\labelopts/\labelcontent in {%
        2/above//{Channel Islands},
        3/above right//{Great Britain},
        5/above//{Ireland (Island)},
        6/above//{United Kingdom},
        7/above right//{$\overline{\text{Northern Ireland}}$},
        8/above//{\parbox{1.7cm}{$\overline{\text{Jersey}}$,$\overline{\text{Guernsey}}$,\\ $\overline{\text{Isle of Man}}$}},
        9/above//{\parbox{4cm}{$\overline{\text{Wales}}$,$\overline{\text{Scotland}}$\\ $\overline{\text{England}}$}},
        10/above right//{\parbox{2.2cm}{\baselineskip=4pt British Islands,\\$\overline{\text{Ireland (State)}}$} },
        1/below left//{Northern Ireland},
        2/below//{\parbox{2.5cm}{\baselineskip=0pt Jersey, Guernsey,\\ Isle of Man}},
        3/below right//{\parbox{2.5cm}{\baselineskip=0pt Wales, Scotland,\\ England}},
        5/below//{Ireland (State)}
      } \coordinate[label={[\labelopts]\labelpos:{\labelcontent}}](c) at (\nodename);
    \end{scope}
  \end{scope}
\end{tikzpicture}
	\end{center}
	\caption{Down-set Birkhoff completion of the concept lattice in Figure~\ref{fig-british-isles-lattice}.}
	\label{fig-british-isles-downsetBC}
\end{figure}

%%% Local Variables:
%%% mode: latex
%%% TeX-master: "2024"
%%% End:

\section{An implicational approach to the Birkhoff completion}
	\label{sec-implications}
	
It is long known that a finite concept lattice is distribtive if and only if its canonical direct basis contains only implications that have one-element premises. After having recalled the basic notions, we will show in this section, that this leads in a straightforward manner to another equivalent approach to defining the Birkhoff completion.

%In this section, we extend the relation of distributivity and the
%\textbf{canonical direct basis} to a concept lattice and its Birkhoff
%completion. For this, we first recall this bases whose premises are
%given by the set of all proper premises. We recall their definition in
%terms of the set of objects.

\begin{definition}[Proper Premise \cite{fca-book}]
  Let $(G,M,I)$ be a context. An attribute set $A\subseteq M$ is a
  \textbf{proper premise} if
  \[ A^{\bullet}\coloneqq A'' \setminus \big(A \cup \bigcup_{n\in A} (A\ \{n\})''\big)\]
  is not empty.
\end{definition}

\begin{definition}
	The canonical direct basis of a finite context $\K$ is defined as the set $\mathcal{CDB}(\K)$ of all
implications $A\to A^{\bullet}$, where $A$ is a proper premise. 
\end{definition}

Based
on the following theorem, we can infer that there is correspondence
between distributive lattices and those that are isomorphic to concept
lattices whose contexts proper premises are singletons.

\begin{theorem}[Theorem 41\,(8) of \cite{fca-book}]
  \label{TH41}
  Let $\K$ be a finite, reduced context. Then $\uB(\K)$ is distributive iff every proper premise of $(G,M,I)$ is a singleton.
\end{theorem}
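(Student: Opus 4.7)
The plan is to treat the two directions separately, relying on the fact that in a finite distributive lattice every meet-irreducible element $m$ is meet-prime: $m\geq\bigwedge_{i\in I}x_i$ implies $m\geq x_i$ for some $i$. This follows from $m = m\vee\bigwedge_i x_i = \bigwedge_i(m\vee x_i)$ by (the dual form of) distributivity together with meet-irreducibility, applied inductively.

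For ($\Rightarrow$), I assume $\uB(\K)$ is distributive and that $A\subseteq M$ is a proper premise, and I aim to force $|A|=1$. Pick $n\in A^{\bullet}$; then $n\in A''$ gives $\bigwedge_{a\in A}\mu(a)\leq\mu(n)$ in $\uB(\K)$. Since $\K$ is reduced, $\mu(n)$ is meet-irreducible, hence meet-prime, so $\mu(a)\leq\mu(n)$, i.e.,~$n\in\{a\}''$, for some $a\in A$. If $|A|\geq 2$, pick $a_0\in A\setminus\{a\}$; then $\{a\}\subseteq A\setminus\{a_0\}$ yields $n\in(A\setminus\{a_0\})''$, contradicting $n\in A^{\bullet}$. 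The edge case $A=\emptyset$ is excluded by reducedness, since any $m\in\emptyset''$ would be attribute-reducible via the empty set.

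For ($\Leftarrow$), assume every proper premise is a singleton and define $a\preceq b:\iff b\in\{a\}''$ on $M$. This relation is reflexive and transitive by standard properties of the closure operator, and reducedness yields antisymmetry ($a''=b''$ forces $a'=b'$, hence $a=b$), so $(M,\preceq)$ is a poset with principal filters $\uparrow a=\{a\}''$. The central claim is that the intents of $\K$ coincide with the order filters of $(M,\preceq)$. Any intent $X''$ is an order filter, since $a\in X''$ forces $\uparrow a=\{a\}''\subseteq X''$. Conversely, an order filter $U$ respects every singleton implication $\{a\}\to\{a\}^{\bullet}$, because $a\in U$ gives $\{a\}^{\bullet}\subseteq\{a\}''=\uparrow a\subseteq U$; since by assumption $\mathcal{CDB}(\K)$ consists only of such singleton implications and is an implicational basis of $\K$, $U$ respects all implications of $\K$ and therefore equals its closure. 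Consequently $\uB(\K)\cong(\mathcal{F}(M,\preceq),\supseteq)$, which is distributive by Theorem~\ref{theorem-dual-birkhoff}. The main obstacle here is the ``basis'' step: justifying that respecting the singleton implications already forces a set to be closed. This is not automatic but is the content of the accompanying parts of Theorem~41 in \cite{fca-book}, which may be cited directly.
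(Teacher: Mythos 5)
Your proof is correct. Note that the paper itself offers no proof of this statement --- it is imported verbatim as Theorem~41\,(8) of \cite{fca-book} --- so there is no in-paper argument to compare against; judged on its own, your two directions are sound. The forward direction correctly combines the facts that in a reduced context the attribute concepts $\mu(n)$ are exactly the meet-irreducibles, that meet-irreducible implies meet-prime in a finite distributive lattice, and that $n\in A^{\bullet}$ forbids $n\in(A\setminus\{a_0\})''$ for any $a_0\in A$; the empty-premise case is correctly dismissed by attribute-reducedness ($\emptyset''=\emptyset$). The backward direction, identifying the intents with the order filters of $(M,\preceq)$ and invoking the dual Birkhoff representation (\cref{theorem-dual-birkhoff}), is the standard route and is in the same spirit as how the paper deploys Birkhoff's theorem elsewhere. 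The one load-bearing step you do not prove --- that a set respecting all implications $A\to A^{\bullet}$ over proper premises $A$ is already closed, i.e., that the canonical direct basis is sound and complete --- is genuinely needed and not automatic, but you flag it explicitly and it is exactly the content of the accompanying parts of Theorem~41 and the proper-premise proposition in \cite{fca-book}, so citing it is legitimate. Two minor points worth tightening if you write this up: state explicitly that antisymmetry of $\preceq$ uses clarification (which reducedness presupposes), and that the family of all order filters is closed under arbitrary unions and intersections in the power set, which is what makes $(\mathcal{F}(M,\preceq),\supseteq)$ distributive.
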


Thus the reduced contexts of concept lattices that are not distributive have
proper premises that are of size 2 or larger. This motivates the following definition. 

\begin{definition}
	Let $\K$ be a finite, reduced context and $A\to A^\bullet$ an implication of its canonical direct basis. We call $A\to A^\bullet$ a  \emph{distributive implication} if $|A|=1$ and a \emph{non-distributive implication} if $|A|\geq 2$.
\end{definition}

With
\cref{theorem-context-BC} (1 $\cong$ 4) we showed that the Birkhoff
completion of a concept lattice can be achieved by computing a finer
closure system $\Int(\BC(\context))$ of $\Int(\context)$ on the set
$M$, i.e., $\Int(\context)\subseteq \Int(\BC(\context))$. With the
following proposition, we study this completion in terms of
implications. For that, we make use of the dualism
between closure systems and their implicational theories, i.e.,
$\Int(\context)\subseteq \Int(\BC(\context))$ implies for the set of
all attribute implications $\Th(\BC(\context))\subseteq \Th(\context)$
\cite{lattice_of_closure_systems}.

\begin{lemma}
	\label{lemma-proper-premises}
	Let $\K$ be a finite, reduced context, $\mathcal{A}$ the set of its proper premises, and $BC(\mathcal{A})\coloneqq \{A\in \mathcal{A} \mid |A|=1 \}$. For all $A\in BC(\mathcal{A})$ holds $A^\bullet = A''\setminus A$ independent of being computed in $\K$ or in $BC(\K)$. Furthermore $BC(\mathcal{A}) $ is the set of all proper premises of $BC(\K)$. 
\end{lemma}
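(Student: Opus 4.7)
The plan is to verify the singleton formula first and then deduce the proper-premises statement from it together with \cref{TH41}. For a singleton $A=\{a\}$, the defining formula collapses to $\{a\}^\bullet = \{a\}''\setminus(\{a\}\cup\emptyset'')$, so I would first show $\emptyset''=\emptyset$ in both $\K$ and $BC(\K)$. In the reduced $\K$ no attribute is shared by every object, since such an attribute would be reducible (witnessed by $B=\emptyset$). In $BC(\K)$ the artificial object $\overline{n}$ lacks $n$ for every $n\in M=\mathcal{M}(M)$, because $n\geq_\K n$. Next, I would show $\{a\}''_\K=\{a\}''_{BC(\K)}$: the inclusion $\{a\}''_{BC(\K)}\subseteq\{a\}''_\K$ is immediate from $\Int(\K)\subseteq\Int(BC(\K))$, and for the reverse I would expand $\{a\}''_{BC(\K)}$ as the intersection of all object intents. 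The original objects contribute exactly $\{a\}''_\K$, while each artificial object $\overline{m}$ that contains $a$ (i.e.\ with $m\not\geq_\K a$) contributes $\overline{m}'=\{n\mid m\not\geq_\K n\}$. For any $b\in\{a\}''_\K$ one has $b\geq_\K a$, so $m\geq_\K b$ would force $m\geq_\K a$ by transitivity, contradicting the choice of $\overline{m}$; hence $b\in\overline{m}'$. This yields $\{a\}^\bullet_\K=\{a\}^\bullet_{BC(\K)}=\{a\}''\setminus\{a\}$.

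For the second claim, one direction is immediate: for every $\{a\}\in BC(\mathcal{A})$ the singleton formula gives $\{a\}^\bullet_{BC(\K)}=\{a\}^\bullet_\K\neq\emptyset$, so $\{a\}$ is a proper premise of $BC(\K)$. For the converse, $\uB(BC(\K))$ is distributive by \cref{theorem-embedding}~(1) together with \cref{theorem-context-BC}. The Dual Birkhoff representation of $BC(\uB(\K))$ supplies exactly $|\mathcal{M}(\uB(\K))|=|M|$ meet-irreducibles in $\uB(BC(\K))$, and a short check shows that distinct $m,n\in M$ still have distinct attribute concepts in $BC(\K)$: reducedness of $\K$ forces $m\not\geq_\K n$ or $n\not\geq_\K m$, and the corresponding artificial object separates $m$ from $n$. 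Hence all attributes of $BC(\K)$ remain context-theoretically irreducible, so reducing $BC(\K)$ only drops reducible objects, preserves the attribute set $M$ together with the intent system, and yields a reduced context with a distributive concept lattice. \cref{TH41} applied to this reduction forces every proper premise of $BC(\K)$ to be a singleton $\{a\}$, and the singleton formula then places $\{a\}$ in $BC(\mathcal{A})$.

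The main technical obstacle is the verification that the attributes of $BC(\K)$ remain meet-irreducible, which is what allows \cref{TH41} to be transferred to $BC(\K)$ without losing the common attribute set $M$; everything else is a direct computation from the explicit description of the incidence of $BC(\K)$ in \cref{def-BC}.
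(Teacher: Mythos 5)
Your proof is correct and follows essentially the same route as the paper's: a direct computation showing $\{a\}^{!!}=\{a\}''$ (via transitivity of $\geq_\K$) and $\emptyset^{!!}=\emptyset$, followed by an appeal to the distributivity of $\uB(BC(\K))$ and Theorem~\ref{TH41} for the converse. The only divergence is in how attribute-reducedness of $BC(\K)$ is justified --- you use a counting argument via the dual Birkhoff representation together with explicit separating objects, while the paper derives it in one line from the fact that refining the intent closure system preserves attribute-irreducibility --- but both are valid and this concerns only a minor sub-step.
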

\begin{proof}
	As in the proof of Theorem~\ref{theorem-context-BC}, we use $\cdot'$ as symbol for the derivation operator in $\K$ and $\cdot^!$ for the derivation operator in $BC(\K)$. 
	We start by observing that $\BC(\context)$ is attribute reduced since $\context$ is attribute reduced and
	$\Int(\context)\subseteq \Int(\BC(\context))$. As both $\K$ and $BC(\K)$ are attribute reduced, we have in particular $\emptyset''=\emptyset=\emptyset^{!!}$.
	Let now $A=\{n\}\in BC(\mathcal{A})$. 
	With
	 \begin{multline*}
	 	A^{!!}=(A^!)^!
	 = (A'\cup\{\overline{m}\in\overline{\mathcal{M}(M)} \mid m \not\geq_\K n \})^!
	 = (A')^!\cap\{\overline{m}\in\overline{\mathcal{M}(M)} \mid m \not\geq_\K n \}^!   \\ 
	 =  A''  \cap \{k\in M \mid  m \not\geq_\K k \textrm{ for all } m \textrm{ with } m  \not\geq_\K n \}
	 =  A''  \cap \{k\in M \mid k\geq_{\context} n\}
	 =  A''
	 \end{multline*}
	 we obtain 
	 \begin{multline*}
	 A^\bullet\enspace (\textrm{computed in } BC(\K)) 
	 = A^{!!}\setminus(A\cup\bigcup_{n\in A} A \setminus\{n\}^{!!}) \\
		= A^{!!}\setminus(A\cup\bigcup_{n\in A} \emptyset^{!!})
		= A^{!!}\setminus A 
	   = \qquad A''\setminus A \qquad
		= A''\setminus(A\cup\bigcup_{n\in A} \emptyset'')\\
	 = A''\setminus(A\cup\bigcup_{n\in A} A \setminus\{n\}'') 
	 = A^\bullet \enspace (\textrm{computed in } \K)\enspace.
	 \end{multline*}
	 It follows immediately that $A$ is a proper premise of $BC(\K)$ because it is a proper premise of $\K$ and thus $A^\bullet$ is not empty. 
	 
	 It remains to show that all proper premises of $BC(\K)$ are contained in $BC(\mathcal{A})$. Let $A$ be a proper premise of $BC(\K)$.
	 We can conclude from
	 $\Int(\context)\subseteq \Int(\BC(\context))$ that
	 $A^{I_{\context}I_{\context}}\neq A$. Moreover, since $\context$
	 is reduced, we can infer that
	 $\emptyset\in \Int(\context)$. Thus $A^{\bullet}$ (computed in
	 $\context$) is not empty and $A$ is a proper premise of $\context$.
	 Additionally, we have $|A|=1$ because of the distributivity of $\uB(BC(\K))$. 
	 \qed\end{proof}

\begin{corollary}
	\label{corollary-BC-CDB}
	For a finite reduced context  $\K$ is $BC(\mathcal{CDB}(\K))=\mathcal{CDB}(BC(\K))$  with 
		\[BC(\mathcal{CDB}(\K)) \coloneqq \{A\to A^{\bullet} \mid A \to A^{\bullet} \in \mathcal{CDB}(\K) \text{ and } \abs{A}=1\}\enspace.\]
\end{corollary}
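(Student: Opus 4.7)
The plan is to derive this corollary as a direct repackaging of Lemma~\ref{lemma-proper-premises} at the level of the canonical direct basis rather than at the level of individual proper premises.

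First I would unfold both sides of the asserted equality purely in terms of proper premises. By definition, $\mathcal{CDB}(BC(\K))$ is the set of all implications $A \to A^{\bullet}_{BC(\K)}$, where $A$ ranges over the proper premises of $BC(\K)$ and the subscript makes explicit that the closure $A^\bullet$ is computed inside $BC(\K)$. On the other side, $BC(\mathcal{CDB}(\K))$ is, by its given definition, the set of implications $A \to A^{\bullet}_{\K}$ with $A$ a proper premise of $\K$ satisfying $|A|=1$, where the closure is taken in $\K$.

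Second, I would apply the two conclusions of Lemma~\ref{lemma-proper-premises} in turn. Its second conclusion identifies the indexing sets of the two collections: the proper premises of $BC(\K)$ are exactly the singleton proper premises of $\K$, namely $BC(\mathcal{A})$. Its first conclusion identifies the right-hand sides of the implications: for each $A \in BC(\mathcal{A})$, the value $A^\bullet$ does not depend on whether it is computed in $\K$ or in $BC(\K)$. Substituting the first identification into the indexing of $\mathcal{CDB}(BC(\K))$ and then the second identification into the consequents turns that description verbatim into the definition of $BC(\mathcal{CDB}(\K))$.

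There is no genuine obstacle here; the entire combinatorial work has already been carried out in the lemma. The only subtlety worth flagging is that the symbol $A^\bullet$ carries two a priori different meanings on the two sides of the claimed equality (closure in $\K$ versus closure in $BC(\K)$), and reconciling these two meanings is precisely what the first part of the lemma accomplishes; once that is recognised, the corollary reduces to a one-line substitution.
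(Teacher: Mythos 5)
Your proposal is correct and matches the paper's intent exactly: the corollary is stated without a separate proof precisely because it is the immediate repackaging of Lemma~\ref{lemma-proper-premises} that you describe, with the second part of the lemma identifying the index sets of proper premises and the first part identifying the consequents $A^\bullet$ across the two contexts. Your explicit flagging of the two a priori different meanings of $A^\bullet$ is the right (and only) point of care here.
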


Note that for all proper premises $A$ of $\K$ with $|A|\geq 2$, $A$ is
not a proper premise of $BC(\K)$, and $A^\bullet$ evaluated in
$BC(\K)$ will be a strict subset of $A^\bullet$ evaluated in $\K$.

With \Cref{lemma-proper-premises} and \cref{corollary-BC-CDB}, we have not only shown a relation between a
concept lattice, its Birkhoff completion and their canonical direct bases, but we can also derive a
procedure for the computation of the Birkhoff completion of the concept lattice. That is, compute the closure system of
attribute sets that are closed with respect to the implicational theory $BC(\mathcal{CDB}(\K))$.

The condition that $\context$ has to be reduced seems to be
a constraint for the application of this proposition. However for a
context that is not reduced, we can first reduce it, then compute
the completion with respect to proper premises and lastly re-add the
reducible attributes to the completion.

%\begin{example}
	The canonical direct basis of the concept lattice (Figure~\ref{fig-british-isles}) about the administrative geography of the UK is given in Figure \ref{fig-implications}. According to \Cref{lemma-proper-premises}, the canonical direct base of the Birkhoff completion of the  would thus consist of implications (1) to (3) --- which can easily be verified in Figure~\ref{fig-british-isles-upsetBC}. 
	
%		The implications that are associated to the three $N_5$-type sublattices of Figure \ref{fig-british-isles-lattice} are $\{\textrm{Ireland (Island), British Islands}\}\to\{\textrm{UK}\}$, $\{\textrm{UK, Channel Islands}\}\to\{\textrm{GB}\}$ and $\{\textrm{Ireland (Island), UK}\}\to\{\textrm{Channel Islands}\}$. The premise of the first implication is fulfilled by Northern Island, while the premises of the second and the third are inconsistent, \ie there are no objects having any of those two attribute combinations.
%\end{example}

\begin{figure}
	\setcounter{equation}{0}
\begin{align}
		\{\textrm{UK}\} & \to \{\textrm{British Islands}\}\\
		\{\textrm{GB}\} & \to  \{\textrm{UK, British Islands}\} \\
		\{\textrm{Channel Islands}\} & \to   \{\textrm{British Islands}\} \\[1ex]
		\{\textrm{Ireland (Island), British Islands}\} & \to   \{\textrm{UK}\} \\   %[1ex]
		\{\textrm{UK, Channel Islands}\} & \to   \{\textrm{Ireland (Island), British Islands, GB}\} \\
		\{\textrm{Channel Islands, GB}\} & \to   \{\textrm{UK, Ireland (Island), British Islands}\} \\
		\{\textrm{Ireland (Island), Channel Islands}\} & \to   \{\textrm{UK, British Islands, GB}\} \\
		\{\textrm{Ireland (Island), GB}\} & \to   \{\textrm{UK, British Islands, Channel Islands}\} 
\end{align}

	\caption{Canonical direct basis of the administrative geography of the UK.}
		\label{fig-implications}
\end{figure}

%%% Local Variables:
%%% mode: latex
%%% TeX-master: "2024"
%%% End:

\forlongversion{\input{truncated-BC}}
\section{Application: analysis of the administrative geography of the UK}
	\label{sec-analysis}
	
We will now study by an example how the Birkhoff completion of the context interacts with the completion by implications; and what can be learnt from the Birkhoff completion. To this end, we consider again the administrative geopgraphy of the UK. 

In the top left of the up-set $BC$ in Figure~\ref{fig-british-isles-upsetBC}, we can see that $\overline{\textrm{British Islands}}$ is generating the same concept as Ireland (State). This indicates that the legal distinction `British Islands' is defined by exclusion -- it contains all countries\footnote{For sake of better readability, we will, in the sequel, subsume the crown dependencies under the term `country'.} of the British Isles except the state of Ireland. 

The other four new objects are all found at the lower part of the concept lattice, and all generate new concepts. These objects invalidate the implications with two-element premises of Figure~\ref{fig-implications}: 
$\{\textrm{Ireland (Island), British Islands}\}\to\{\textrm{UK}\}$ is invalidated by $\overline{\textrm{UK}}$, $\{\textrm{UK, Channel Islands}\}\to\{\textrm{GB}\}$ by $\overline{\textrm{GB}}$, and \linebreak $\{\textrm{Ireland (Island), GB}\}\to\{\textrm{Channel Islands}\}$ by $\overline{\textrm{Channel Islands}}$. 

Let us now assume for a moment, that geographical features are given by nature and not subject to modification, while legal distinctions might change over the years. We would thus assume that Ireland (Island), GB and the Channel Islands will always remain geographically disjoint. A new country having all attributes of $\overline{\textrm{Channel Islands}}$ will therefore never exist.
% (In terms of implications that would yield $\{\textrm{Ireland (Island), GB } \} \to\{\bot\}$.) 
On the other hand, in the future there may arise new countries on the British Isles that are affiliated to different combinations of legal distinctions ---- or existing countries might change their affiliations. Therefore we cannot exclude the potential existence of countries having the attributes of $\overline{\textrm{UK}}$, of $\overline{\textrm{GB}}$ or of $\overline{\textrm{Ireland(Island)}}$. 

\forlongversion{Overall, the resulting truncated BC is displayed in Figure~\ref{fig-british-isles-truncatedBC}.}

We conclude this study of UK's administrative geography by a look at the down-set Birkhoff completion in Figure~\ref{fig-british-isles-downsetBC}. First, we observe that all implications of the canonical direct base (Figure~\ref{fig-implications}) still hold. Note that this is not contradicting Theorem~\ref{TH41} even though the down-set completion is distributive, because the original attributes (except `British Islands') are not reduced any more.  

As in the up-set completion we can observe that `British Islands' is the complement of `Ireland (State)', because $\overline{\mbox{Ireland (State)}}$ is attached to the same concept as `British Islands'. The seven other new attibutes generate three new concepts. These concepts can be understood in two ways. The first is as disjunction of the attributes below. The top left concept, for instance, could be understood as `Ireland (Island) \texttt{or} Channel Islands'. The second way is to consider them as negation, but note that we are now negating objects. While one is tempted to consider   $\overline{\mbox{Wales}}$,  $\overline{\mbox{Scotland}}$ $\overline{\mbox{England)}}$ at the top left as `$\neg$ Great Britain', the attribute  $\overline{\mbox{Northern Island}}$ shows that it is not that simple: Because of the way it was constructed this attribute is incident with all objects that are neither `Northern Island' nor other objects (such as `Ireland (State)') which are implied (as object implication) by `Northern Island'. Either way, we can understand these newly generated concepts as proposals for new concepts which are waiting for a denomination.

%%%---------------------
\forlongversion{
	
We conclude this section with a classical example~\cite[p. 17ff]{fca-book}): The concept lattice displayed in Figure~\ref{fig-living-beings} was used for the planning of an Hungarian educational film entitled ``Living Beings and Water''. Its aim was to put features of animals and plants that were relevant to the topic in a suitable order and to find illustrating objects. 

If we were interested in the full implicational logic of this domain together with a set of living beings that are representative for the whole data set, performing an some kind of exploration \cite{ganter2016conceptual} --- e.\,g., attribute exploration \cite{ganter1999attribute} or distributive concept exploration \cite{stumme1995knowledge} --- would be appropriate. If either the required effort or the expected outcome are considered too large --- certainly the educational film shall not replace a whole study of biology ---  one might want to start with the context already given and check only for some local improvements. For this purpose, \todo{...}

The canonical direct base consists of 18 implications. Four of them are distributive, eg. $\{$suckles its offspring$\}\to\{$has limbs, lives on water, can move$\}$, the other 14 have premise size 2. For 12 of these 14 non-distributive implications, the infimum of the premise is the bottom element of the lattice, which means that none of the existing objects have this attribute combination. Most of these attribute combinations will not occur for any living being (e\,g., `suckles its offspring' and `needs chlorophyll') and can easily be added to the set of inconsistencies. Only one combination is debatable: is there some animal that suckles its offspring and lives in water? 

Its truncated Birkhoff completion is displayed in Figure~\ref{fig-living-beings-truncatedBC}. The truncation was performed such that all premises of size 2 or larger where checked if they are logically inconsistent. All these premises were included in the set of inconsistencies. Then all implications with premise size 2 or larger were removed. 

\begin{figure}[t]
	\begin{center}
    \includegraphics[height=6cm]{figs/fische_hand.tikz}
	\end{center}
	\caption{Concept lattice for the educational film ``Living Beings and Water''~(reproduced from \protect\cite[p. 24]{fca-book}).}
	\label{fig-living-beings}
%	\caption{Living Beings and 		Water ---  context from removing proper premises of size two and larger that do not imply all attributes.}
\end{figure}

\begin{figure}[t]
	\begin{center}
		\colorlet{mivertexcolor}{white}
\colorlet{jivertexcolor}{white}
\colorlet{vertexcolor}{white}
\colorlet{bordercolor}{black}
\colorlet{linecolor}{gray}
% parameter corresponds to the used valuation function and can be addressed by #1
\tikzset{vertexbase/.style 2 args={semithick, shape=circle, inner sep=2pt, outer sep=0pt, draw=bordercolor},%
  vertex/.style 2 args={vertexbase={#1}{}, fill=vertexcolor!45},%
  mivertex/.style 2 args={vertexbase={#1}{}, fill=mivertexcolor!45},%
  jivertex/.style 2 args={vertexbase={#1}{}, fill=jivertexcolor!45},%
  divertex/.style 2 args={vertexbase={#1}{}, top color=mivertexcolor!45, bottom color=jivertexcolor!45},%
  conn/.style={-, thick, color=linecolor}%
}
\begin{tikzpicture}[scale=0.3,font=\footnotesize]
  \begin{scope} %for scaling and the like
    \begin{scope} %draw vertices
      \foreach \nodename/\nodetype/\param/\xpos/\ypos in {%
        0/vertex//6.996829574834679/-1.051083022120089,
        1/jivertex//-5.002966101695072/4.959322033898305,
        2/divertex//9.983161088767456/5.0104589198506115,
        3/jivertex//19.069067796610014/5.023728813559321,
        4/divertex//-8.948047615627845/5.063786268313697,
        5/jivertex//14.979237288135435/7.052542372881357,
        6/jivertex//-0.9453389830510037/7.06864406779661,
        7/jivertex//21.178389830508316/8.88813559322034,
        8/vertex//-7.047881355932323/8.920338983050849,
        9/jivertex//17.120762711864252/8.96864406779661,
        10/jivertex//-3.0546610169492787/9.049152542372882,
        11/vertex//16.975847457626962/10.852542372881357,
        12/vertex//12.98262711864391/10.884745762711866,
        13/vertex//-2.9902542372882497/10.949152542372882,
        14/vertex//7.024999999999844/10.949152542372882,
        15/vertex//0.9224576271185505/10.98135593220339,
        16/mivertex//19.069067796610014/12.849152542372883,
        17/mivertex//-4.890254237288205/12.897457627118646,
        18/mivertex//-0.8326271186441314/14.92627118644068,
        19/mivertex//15.043644067796452/14.942372881355933,
        20/mivertex//5.044491525423677/14.974576271186443,
        21/mivertex//9.037711864406738/15.055084745762713,
        22/vertex//7.0/19.0
      } \node[\nodetype={\param}{}] (\nodename) at (\xpos, \ypos) {};
    \end{scope}
    \begin{scope} %draw connections
      \path (20) edge[conn] (22);
      \path (10) edge[conn] (15);
      \path (8) edge[conn] (13);
      \path (13) edge[conn] (18);
      \path (9) edge[conn] (16);
      \path (21) edge[conn] (22);
      \path (1) edge[conn] (8);
      \path (4) edge[conn] (8);
      \path (3) edge[conn] (7);
      \path (9) edge[conn] (12);
      \path (8) edge[conn] (17);
      \path (14) edge[conn] (20);
      \path (12) edge[conn] (19);
      \path (7) edge[conn] (11);
      \path (0) edge[conn] (4);
      \path (5) edge[conn] (11);
      \path (6) edge[conn] (13);
      \path (12) edge[conn] (20);
      \path (16) edge[conn] (19);
      \path (17) edge[conn] (18);
      \path (0) edge[conn] (3);
      \path (5) edge[conn] (14);
      \path (5) edge[conn] (12);
      \path (15) edge[conn] (18);
      \path (13) edge[conn] (20);
      \path (11) edge[conn] (21);
      \path (6) edge[conn] (14);
      \path (1) edge[conn] (10);
      \path (15) edge[conn] (21);
      \path (6) edge[conn] (15);
      \path (10) edge[conn] (17);
      \path (1) edge[conn] (6);
      \path (11) edge[conn] (19);
      \path (0) edge[conn] (1);
      \path (18) edge[conn] (22);
      \path (7) edge[conn] (16);
      \path (14) edge[conn] (21);
      \path (3) edge[conn] (5);
      \path (3) edge[conn] (9);
      \path (2) edge[conn] (12);
      \path (19) edge[conn] (22);
      \path (0) edge[conn] (2);
    \end{scope}
    \begin{scope} %add labels
      \foreach \nodename/\labelpos/\labelopts/\labelcontent in {%
        2/above//{\parbox{2cm}{\baselineskip=0pt two seed\\ leaf}},
        4/above//{\parbox{2cm}{\baselineskip=0pt suckles its\\ offspring}},
        16/above//{\parbox{1.5cm}{\baselineskip=0pt one seed\\ leaf}},
        17/above//{has limbs},
        18/above//{\parbox{2cm}{\baselineskip=0pt can move\\ around}},
        19/above//{\parbox{1.5cm}{\baselineskip=0pt needs\\ chlorophyll}},
        20/above//{\parbox{1.5cm}{\baselineskip=0pt lives on\\ land}},
        21/above//{\parbox{1cm}{\baselineskip=0pt lives in\\ water}},
        22/above//{needs water to live},
        1/below//{frog}, %\{has limbs, lives on land, lives in water, can move\}
        2/below//{bean}, % \{needs chlorophyll, lives on land, dicotyledon\}
        3/below//{reed}, % \{needs chlorophyll, lives on land, lives in water, monocotyledon\}
        4/below//{dog}, % \{has limbs, breast feeds, lives on land, can move\}
        5/below//{{\color{red} Sphagnum}}, %Torfmoos    \{needs chlorophyll, lives on land, lives in water\}
        6/below//{{\color{red} Caecilian}}, %Blindlurch     \{lives on land, lives in water, can move\}
        7/below//{\parbox{1cm}{\baselineskip=0pt spike\\ weeds}}, %\{needs chlorophyll, lives in water, monocotyledon\}
        9/below//{maize}, %\{needs chlorophyll, lives on land, monocotyledon\}
        10/below//{bream}, %\{has limbs, lives in water, can move\}
        15/below//{leech}
      } \coordinate[label={[\labelopts]\labelpos:{\labelcontent}}](c) at (\nodename);
    \end{scope}
  \end{scope}
\end{tikzpicture}
	\end{center}
	\caption{Truncated BC of the concept lattice of Figure~\protect\ref{fig-living-beings}.}
	\label{fig-living-beings-truncatedBC}
	%	\caption{Living Beings and 		Water ---  context from removing proper premises of size two and larger that do not imply all attributes.}
\end{figure}

}  %%%---------------------

\section{Conclusion}
	\label{sec-conclusion}
	
Motivated by the fact that distributivity in concept lattices arises from the fact that everyday thinking predominantly uses simple implications, we have introduced the Birkhoff completion of a finite lattice $L$ as the minimal distributive lattice in which $L$ can be embeddeded as sub-semilattice.  We have presented a corresponding completion for formal contexts and a corresponding restriction of the implicational basis, and have shown their equivalences. 

Depending on the characteristics of the dataset, either the up-set $BC$ or the down-set $BC$ might be more beneficial. If all objects under consideration are known from the beginning --- like the countries in the British Isles --- the down-set $BC$ might focus our view on potentially new attributes for their description, while the up-set $BC$ is only of interest, if we do not want to exclude the possibility of new objects. 

\forlongversion{In cases where we know only some part of the object set --- like in the living beings example --- the upset BC guides us to structurally relevant objects that we might want to add to the set of explicitly listed objects. }

We have defined the up-set Birkhoff completion of a lattice as standard rather than the down-set completion because of its dualism with the Birkhoff completion of the canonical direct base for the attribute implications. 
The same dualism holds of course also the down-set completion together with the object implications, but the latter are far less promeninent in everyday thinking.

Note that the Birkhoff completions are by no means the only ways to turn a non-distributive lattice into a distributive one. Other obvious ways are the deletion of objects and/or attributes, or the modification of the incidence relation of the formal context. Our future work will study the effects of these options more closely. 
When one is aiming at simplifying a given dataset so that all its implications are distributive, one could choose any of these options. Which choice is more adequate depends on the characteristics of the described objects. 
\forlongversion{For instance, in the living being case it is a natural option to add a new animal or plant to the small set of prototypical examples, while we would not expect that an object is changing its set of attributes.
For the UK example, the situation is quite the opposite: We do not easily see new countries emerging, and the removal of a country would be heavily disputed. However, changing the affiliation of a country to supranational bodies happens from time to time. These examples show that right choice of analysis tool requires at least some understanding of the domain of interest. Nevertheless the variety of approaches might be useful for what-if-analyses.}

\bibliographystyle{splncs04}
\bibliography{paper}
\end{document}